\newcommand{\beqa}{\begin{eqnarray*}}
\newcommand{\eeqa}{\end{eqnarray*}}
\newcommand{\beqn}{\begin{eqnarray}}
\newcommand{\eeqn}{\end{eqnarray}}
\newcommand{\R}{\mathbb R}
\newcommand{\G}{\Gamma}
\newcommand{\De}{\Delta}
\newcommand{\la}{\lambda}
\newcommand{\bs}{\boldsymbol}
\newcounter{cnt1}
\newcounter{cnt2}
\newcounter{cnt3}
\newcommand{\blr}{\begin{list}{$($\roman{cnt1}$)$}
 {\usecounter{cnt1} \setlength{\topsep}{0pt}
 \setlength{\itemsep}{0pt}}}
\newcommand{\bla}{\begin{list}{$($\alph{cnt2}$)$}
 {\usecounter{cnt2} \setlength{\topsep}{0pt}
 \setlength{\itemsep}{0pt}}}
\newcommand{\bln}{\begin{list}{$($\arabic{cnt3}$)$}
 {\usecounter{cnt3} \setlength{\topsep}{0pt}
 \setlength{\itemsep}{0pt}}}
\newcommand{\el}{\end{list}}
\newtheorem{thm}{Theorem}
\newtheorem{Def}[thm]{Definition}
\newtheorem{rem}[thm]{Remark}
\newcommand{\Rem}{\begin{rem} \rm}
\newcommand{\bdfn}{\begin{Def} \rm}
\newcommand{\edfn}{\end{Def}}
\newcommand{\ba}{\begin{array}}
\newcommand{\ea}{\end{array}}
\begin{document}
\title{\bf Two Mathematically Equivalent Versions of Maxwell's Equations}
\author[Gill]{Tepper L. Gill}
\address[Tepper L. Gill]{ Department of Electrical Engineering Howard University\\
Washington DC 20059 \\ USA, {\it E-mail~:} {\tt tgill@howard.edu}}
\author[Zachary]{Woodford W. Zachary}
\address[Woodford W. Zachary]{ Department of Electrical Engineering \\ Howard
University\\ Washington DC 20059 \\ USA, {\it E-mail~:} {\tt
wwzachary@earthlink.net}}
\keywords{special relativity, proper time, radiation reaction}
\begin{abstract}
This paper is a review of the canonical proper-time approach to relativistic mechanics and classical electrodynamics.  The purpose is to provide a physically complete classical background for a new approach to relativistic quantum theory. Here, we first show that there are two versions of Maxwell's equations.  The new version fixes the clock of the field source for all inertial observers.  However now, the (natural definition of the effective) speed of light is no longer an invariant for all observers, but depends on the motion of the source. This approach allows us to account for radiation reaction without the Lorentz-Dirac equation, self-energy (divergence), advanced potentials or any assumptions about the structure of the source.  The theory provides a new invariance group which, in general, is a nonlinear and nonlocal representation of the Lorentz group.  This approach also provides a natural (and unique) definition of simultaneity for all observers.

The corresponding particle theory is independent of particle number, noninvariant under time reversal (arrow of time), compatible with quantum mechanics and has a corresponding positive definite canonical Hamiltonian associated with the clock of the source. 

We also provide a brief review of our work on the foundational aspects of the corresponding relativistic quantum theory.  Here, we show that the standard square-root and the Dirac equations are actually two distinct  spin-$\tfrac{1}{2}$ particle equations.
\end{abstract}
\maketitle
\section*{\bf{Introduction}}
Einstein begins his 1905 [1], paper with the statement: ``It is known that Maxwell's electrodynamics as usually understood at the present time - - when applied to moving bodies, leads to asymmetries which do not appear to be inherent in the phenomena."  After quoting a few examples and the unsuccessful attempts of experimenters to discover the light medium (ether), he concludes that mechanics as well as electrodynamics possess no properties corresponding to absolute rest.  Thus, the laws of electrodynamics and optics will be valid for all frames in which the equations of mechanics holds.  He then suggests that we raise his conjecture to the status of a postulate called the ``principle of relativity''.  He then adds one other postulate to provide what is now known as the basic postulates of the special theory of relativity.

As noted by Bridgman [2], the special theory allows us to by-pass but not answer the fundamental question of ``the nature of the physical mechanism by which objects are lighted''.  From an operational point of view,  we must ask is it physically possible to consider light as a ``thing'' that travels?  Bridgman [2] observed that:
\begin{quote}
we can give no operational meaning to the idea that light exists at each point between source and sink.   The idea of light as a thing traveling is pure invention based on sense perceptions and the mechanical world view.
\end{quote}
\begin{rem}
Today, the generally accepted interpretation is that electromagnetic waves are formed when an electric field, $\bf E$, couples with a magnetic field, $\bf B$. (The electric and magnetic fields of an electromagnetic wave are perpendicular to each other and to the direction of the wave.)  The basic assumption is that these two fields move in the vacuum  and no ether is required (see Purcell [3]).
\end{rem}
However, we must be careful that we don't use the description of what these waves are composed of (i.e., solutions of the wave equation) as an interpretation of how they travel.
    
As noted in Miller [4], Einstein chose to consider light as a thing traveling for convenience.  This allowed him to use the standard notion of velocity for measurement purposes.  However, in the special theory, light is not a material particle nor is it a wave since, if it's a particle, its velocity cannot be independent of the source motion and, if it's a wave, it must travel in a medium (the ether), which is known to not have any effect on light! (Since the advent of quantum theory, many have assumed that the question is resolved by wave-particle duality, but this is far from true.)

It should not go unnoticed that, in a paper published almost at the same time, Einstein [5] used the concept of light as a ``localized energy packet'' to explain the photoelectric effect.  In fact, Planck [6] wrote: ``According to the latest statements by Einstein, it would be necessary to assume that free radiation in vacuum, and hence light waves themselves, has an atomistic constitution, and thus to abandon Maxwell's equations."  We should not be amazed at Planck's statement since, at the time,  the question of the need for Maxwell's equations at all was still an open subject.

In 1867, Ludvig Lorenz [7] introduced retarded vector and scalar potentials.   It was shown that these led to the same results obtained by Maxwell via the introduction of the displacement current into Amp\`{e}re's law.   Indeed, it has been known since then that all the results of the Maxwell theory can be obtained directly from the potentials, without ever introducing fields (i.e., action-at-a-distance).  It has recently been shown by Hamdan, Hariri and L\'{o}pez-Bonilla [8] that one can derive Maxwell's equations directly from the Lorentz force.  

There were many who took  L. Lorenz's position, but the major protagonist in this debate was Walther Ritz [9].  Ritz, like Einstein, accepted H. A. Lorentz's theory of the electron but rejected the ether.  He further noted that, from a strictly logical point of view, Maxwell's electric and magnetic fields, which appear to play such an important role, can be  entirely eliminated from the theory.  He argued that, in reality, Maxwell's theory deals only with certain relations between space and time.  In his view, we could simply return to the elementary actions (retarded potentials).  He further pointed out that the field equations have an infinite number of solutions that are incompatible with experiment and, in order to eliminate these extraneous solutions, it is necessary to adopt the retarded potentials anyway.  This introduces an additional assumption which is not needed if we start with the retarded potentials in the first place.

Einstein did not completely accept, but was swayed by Ritz's position.  Indeed, in his 1909 paper [10], Einstein stated:
\begin{quote}
According to the usual theory, an oscillating ion generates a divergent spherical wave.  The reverse process does not exist as a elementary process.  The convergent wave is indeed mathematically possible; but for its approximate realisation an enormous number of elementary emitting systems would be required. Hence, the elementary process of light-emission has not as such the character of reversibility.  Herein, I believe, our wave theory is incorrect.  It seems in relation to this point Newton's emission theory contains more truth than the wave theory, for the energy communicated to a light-particle in emission is not spread over infinite space but remains available for an elementary process of absorption.
\end{quote}
Here, Einstein is agreeing with Ritz's position that retarded potentials express the elementary process of emission, whereas Maxwell's equations do not.  We get a further clue to Einstein's thinking on this subject from his {\it Autobiographical notes} of 1949 [11] (see Brown [12]).  
\begin{quote}
Reflections of this type made it clear to me as long ago as 1900, i.e., shortly after Planck's trailblazing work, that neither mechanics nor electrodynamics could (except in limiting cases) claim exact validity.
\end{quote}
Brown points out that, because he was not sure that Maxwell's theory would survive the existence of photons, Einstein had the foresight to derive the Lorentz transformations from kinematical arguments, as opposed to the symmetry properties of Maxwell's equations.  (He believed that the Lorentz transformations were fundamental and would survive any failures in the Maxwell theory.) 
 
There always was a certain tension between field theory and action-at-a-distance.  The most famous recent work on the subject is the Wheeler-Feynman formulation of classical electrodynamics [13], in which they eliminate the field completely in favor of action-at-a-distance.  This allowed them to solve the self-energy divergence problem associated with the then-accepted Dirac theory [14] (for details, see [18]).  However, among other things, the need for both advanced and retarded interactions, the inability to quantize and the intrinsic usefulness of the self-energy divergence for the success of quantum electrodynamics became important reasons for its lack of favor as a replacement for the Dirac approach.
\subsection*{\bf Purpose:\;}  
In this paper, we introduce a canonical proper-time formulation of classical electrodynamics, where the local clock of the moving system replaces the clock of the observer. This approach is mathematically equivalent, but is not physically equivalent to the conventional approach.  Physically, this change is equivalent to a new definition of velocity for relativistic systems and by-passes all the well-known problems with the conventional theory.  We also develop the corresponding particle theory, which produces a positive definite canonical Hamiltonian; the local clock is non-invariant under time-reversal (time-arrow).  As an application, we briefly look at a few areas where additional assumptions are required in order to restore the conventional theory and explain new findings.  The main purpose is to prepare the way, on the classical level, for the corresponding (relativistic) quantum theory.  However, we briefly discuss the quantum foundations and our current progress in this direction.  
\section{\bf Maxwell Theory}
\subsection{{Foundations}}
The basis for all of natural science is the belief that there is a real, external world, which exists independent of our perceptions of it. The first postulate is an outgrowth of the fundamental question of what can we know about this external world and how do we know we can actually establish that we know it?  The answer to this question for physics is to explicitly state the conditions under which we are willing to call a particular mathematical model a physical law, namely, if it satisfies the principle of relativity.  

A major objective of theoretical physics is to provide a mathematical model (using a minimal number of variables) to describe the cause and effect relationships observed in the experiment.  If this model is physically reliable, mathematically consistent and leads to insights and/or predictions that were heretofore unsuspected, it may, over time, acquire the status of a physical law.  The first postulate of the special theory of relativity imposes a natural constraint on the extent that we may believe in the law and the results of related experiments; namely that, any other observer (not necessarily human), in any other galaxy using similar equipment, in any other inertial frame of reference, must be able to obtain similar results and model, with differences accounted for by a Lorentz transformation.  Thus, a true physical law must be independent of the particular class of sentient beings discovering it.  

\subsection{Local-time Maxwell Equations}
In his formulation, it was natural for Einstein to use the clock of the observer to measure time. In the following section, we show that an equally valid clock to use is the local clock of the observed system, which is generally known as the proper-time.  (In this terminology, the conventional clock used is the proper-time of the observer.) 

In order to formulate the local-time version of Maxwell's equations, it is convenient to start with the standard definition of proper-time:
\[
d\tau ^2  = dt^2  - \frac{1}{{c^2 }}d{\mathbf{x}}^2  = dt^2 \left[ {1 - \frac{{{\mathbf{w}}^2 }}{{c^2 }}} \right],\quad {\mathbf{w}} = \frac{{d{\mathbf{x}}}}{{dt}}.
\]
Motivated by geometry and the mathematical philosophy of the era, Minkowski introduced the concept of proper time (first discovered by Poincar\'{e}).  Recently, it has been suggested by Damour [15]  that Minkowski was not aware that $d{\tau}$ is not an exact one-form and hence cannot be used for a metric. It is clear that Minkowski became aware of this fact (see Sommerfeld's notes in  [16] after the translation of Minkowski's paper (pg. 94)).   (For very interesting additional discussion on this and other related points, see Walters [17] and included references.)
   
Nevertheless, some have dismissed this (physical) fact by attaching a ``co-moving observer" on the tangent curve (bundle) of the moving particle in order to induce an instantaneous exact one-form for the four-geometry at each time slice. This is mathematically correct but physically invalid, since for example, it is impossible for such an observer to sit on the tangent curve of a high-energy muon entering our atmosphere, or an antiproton in the accelerator at Fermi Laboratory.  In fact, the use of time dilation in order to explain results that have no explanation using the observer's measuring rods and clock, must be considered incorrect at a basic methodological level.
 
However, there is an important physical reason why $d{\tau}$ is not an exact  one-form. Physically, a particle can traverse many different paths (in space) during any given $\tau$ interval.  This reflects the fact that the distance traveled in a given time interval depends on the forces acting on the particle.  This suggests that the actual clock of the source carries additional physical information about the acting forces.    In order to see that this is the case, rewrite the above equation as:
\[
dt^2  = d\tau ^2  + \frac{1}{{c^2 }}d{\mathbf{x}}^2  = d\tau ^2 \left[ {1 + \frac{{{\mathbf{u}}^2 }}{{c^2 }}} \right],\quad {\mathbf{u}} = \frac{{d{\mathbf{x}}}}
{{d\tau }}.
\]
For any other observer, we have:
\[
dt'^2  = d\tau ^2  + \frac{1}{{c^2 }}d{\mathbf{x}}'^2  = d\tau ^2 \left[ {1 + \frac{{{\mathbf{u}}'^2 }}{{c^2 }}} \right],\quad {\mathbf{u}}' = \frac{{d{\mathbf{x}}'}}
{{d\tau }}.
\]
Thus, all observers can use one unique clock to discuss all events associated with the source (simultaneity). It should also be noted that this also corresponds to a change in our (independent) configuration space variables from $( {{\bf x}(t)}, {{\bf w}(t)}) \rightarrow ({{\bf x}(\tau)}, {{\bf u}(\tau)})$. (However, because the corresponding momentum ${\mathbf{p}} = m{\mathbf{w}} = m_0 {\mathbf{u}}$ where $m = \gamma m_0 $, the phase space variables do not change.)  Returning to the first equation, we see that the new metric defined by $dt$ is clearly exact, while the representation space is now Euclidean.  However, the natural definition of velocity is no longer ${\bf{w}}=d{\bf{x}}/dt$ but ${\bf{u}}=d{\bf{x}}/d{\tau}$. This fact suggests that there may be a certain duality in the relationship between $t,\;\tau$ and ${\bf{w}},\;{\bf{u}}$.  To see that this is indeed the case, recall that ${\mathbf{u}} = {{\mathbf{w}} \mathord{\left/
 {\vphantom {{\mathbf{w}} {\sqrt {1 - \left( {{{{\mathbf{w}}^2 } \mathord{\left/
 {\vphantom {{{\mathbf{w}}^2 } {c^2 }}} \right. \kern-\nulldelimiterspace} {c^2 }}} \right)} }}} \right. \kern-\nulldelimiterspace} {\sqrt {1 - \left( {{{{\mathbf{w}}^2 } \mathord{\left/
 {\vphantom {{{\mathbf{w}}^2 } {c^2 }}} \right. \kern-\nulldelimiterspace} {c^2 }}} \right)} }}$.  Solving for ${\bf{w}}$, we get that ${\mathbf{w}} = {{\mathbf{u}} \mathord{\left/ {\vphantom {{\mathbf{u}} {\sqrt {1 + \left( {{{{\mathbf{u}}^2 } \mathord{\left/ {\vphantom {{{\mathbf{u}}^2 } {c^2 }}} \right. \kern-\nulldelimiterspace} {c^2 }}} \right)} }}} \right. \kern-\nulldelimiterspace} {\sqrt {1 + \left( {{{{\mathbf{u}}^2 } \mathord{\left/ {\vphantom {{{\mathbf{u}}^2 } {c^2 }}} \right. \kern-\nulldelimiterspace} {c^2 }}} \right)} }}$.    If we set $b = \sqrt {c^2  + {\mathbf{u}}^2 }$, this relationship can be written as  
\beqn
\frac{{\mathbf{w}}}{c} = \frac{{\mathbf{u}}}{b}.
\eeqn
For reasons to be clear momentarily, we call $b$ the collaborative speed of light.  Indeed, we see that 
\beqn
\frac{1}{c}\frac{\partial }{{\partial t}} = \frac{1}{c}\frac{{\partial \tau }}
{{\partial t}}\frac{\partial }{{\partial \tau }} = \frac{1}{c}\frac{1}
{{\sqrt {1 + \left( {{{{\mathbf{u}}^2 } \mathord{\left/{\vphantom {{{\mathbf{u}}^2 } {c^2 }}} \right.\kern-\nulldelimiterspace} {c^2 }}} \right)} }}\frac{\partial }{{\partial \tau }} = \frac{1}{b}\frac{\partial }{{\partial \tau }}.
\eeqn
For any other observer, it is easy to see that the corresponding result will be:
\beqa
\frac{{{\bf{w}}'}}
{c} = \frac{{{\bf{u}}'}}
{{b'}},\quad \quad \frac{1}
{c}\frac{\partial }
{{\partial t'}} = \frac{1}
{{b'}}\frac{\partial }
{{\partial \tau }}.
\eeqa
We see from the above two equations that the non-invariance of $t$ and the invariance of $c$ on the left is replaced by the non-invariance of $b$ and the invariance of $\tau$ on the right.  These equations clearly represent mathematically equivalent relations (i.e., they are identities). Thus, wherever they are used consistently as replacements for each other, they can't change the mathematical relationships. In order to see their impact on Maxwell's equations, write them (in c.g.s. units)
\beqn
\begin{gathered}
\nabla  \cdot {\mathbf{B}} = 0,\quad \quad \quad \nabla  \cdot {\mathbf{E}} = 4\pi \rho , \hfill \\
\nabla  \times {\mathbf{E}} =  - \frac{1}{c}\frac{{\partial {\mathbf{B}}}}{{\partial t}},\quad \nabla  \times {\mathbf{B}} = \frac{1}{c}\left[ {\frac{{\partial {\mathbf{E}}}}
{{\partial t}} + 4\pi \rho {\mathbf{w}}} \right]. \hfill \\ 
\end{gathered} 
\eeqn
Using equations (1) and (2) in (3), we have ({{\it{the identical mathematical representation for Maxwell's equations}}):
\beqa
\begin{gathered}
\nabla  \cdot {\mathbf{B}} = 0,\quad \quad \quad \nabla  \cdot {\mathbf{E}} = 4\pi \rho , \hfill \\
\nabla  \times {\mathbf{E}} =  - \frac{1}{b}\frac{{\partial {\mathbf{B}}}}{{\partial \tau }},\quad \nabla  \times {\mathbf{B}} = \frac{1}{b}\left[ {\frac{{\partial {\mathbf{E}}}}
{{\partial \tau }} + 4\pi \rho {\mathbf{u}}} \right]. \hfill \\ 
\end{gathered} 
\eeqa
Thus, we see that Maxwell's equations are equally valid when the local time of the particle is used to describe the fields.  This leads to the following conclusions:
\begin{enumerate}
\item There are two distinct clocks to use in the representation of Maxwell's equations.  Thus, the choice of clocks is a convention in the true sense of Poincar\'{e} (see conclusion). 
\item Since the two representations are mathematically equivalent, we conclude that mathematical equivalence is not always physical equivalence.   (This will be clear after we derive the corresponding wave equation below.)  
\item When the local clock of the system is used, the constant speed of light $c$ is replaced by the effective speed of light $b$, which depends on the motion of the system (i.e., $b = \sqrt {c^2  + {\mathbf{u}}^2 }$).
\item  There is another group (closely related to the Lorentz group) which fixes the local-time of the particle for all observers. 
\end{enumerate}
Before constructing the proper-time group, we derive the corresponding wave equations in the local-time variable.  Taking the curl of the last two equations (above), and using standard vector identities, we get: 
\beqn
\begin{gathered}
 \frac{1}{{b^2 }}\frac{{\partial^2 {\mathbf{B}}}}{{\partial \tau ^2 }} - \frac{{{\mathbf{u}} \cdot {\mathbf{a}}}}{{b^4 }}\left[ {\frac{{\partial {\mathbf{B}}}}
{{\partial \tau }}} \right] - \nabla ^2 \cdot {\mathbf{B}} = \frac{1}{b}\left[ 4\pi \nabla  \times (\rho {\mathbf{u}}) \right], \hfill \\
 \frac{1}{{b^2 }}\frac{{\partial ^2 {\mathbf{E}}}}{{\partial \tau ^2 }} - \frac{{{\mathbf{u}} \cdot {\mathbf{a}}}}{{b^4 }}\left[ {\frac{{\partial {\mathbf{E}}}}
{{\partial \tau }}} \right] - \nabla ^2  \cdot {\mathbf{E}} =  - \nabla (4\pi \rho ) - \frac{1}{b}\frac{\partial }{{\partial \tau }}\left[ {\frac{{4\pi (\rho {\mathbf{u}})}}{b}} \right], \hfill \\ 
\end{gathered} 
\eeqn				
where ${\bf{a}} = d{\bf{u}}/d\tau$ is the effective acceleration caused by external forces.  Thus, we see that a new term arises when the proper-time of the source is used to describe the fields.   This makes it clear that the local clock encodes information about the particle's interaction that is unavailable when the clock of the observer is used to describe the fields, and shows clearly that physical equivalence is not the same as mathematical equivalence.   The new term in equation (4) is dissipative, acts to oppose the acceleration, is zero when ${\bf{a}} =0$  and, arises instantaneously with the action of forces on the particle. {\it Furthermore, as expected, this term does not depend on the nature of the force causing the acceleration of the charged particle.} This is exactly what one expects of the back reaction caused by the inertial resistance of the particle to accelerated motion and, according to Wheeler and Feynman [13], is precisely what is meant by radiation reaction.  Thus, the collaborative use of the observer's coordinate system and the local clock of the observed system provides intrinsic information about the field dynamics not available in the conventional formulation of Maxwell's theory.  If we make a scale transformation (at fixed position) with ${\bf{E}} \to (b/c)^{1/2}{\bf{E}}$    and ${\bf{B}} \to (b/c)^{1/2}{\bf{B}}$, the equations in (4) transform to 
\beqn
\begin{gathered}
 \frac{1}{{b^2 }}\frac{{\partial ^2 {\mathbf{B}}}}{{\partial \tau ^2 }} - {\text{ }}\nabla ^2 {\kern 1pt}  \cdot {\mathbf{B}} + \left[ \frac{{\ddot b}}
{{2b^3 }}-{\frac{{3\dot b^2 }}{{4b^4 }}  } \right]{\mathbf{B}} = \frac{{c^{1/2} }}{{b^{3/2} }}\left[ {4\pi \nabla  \times (\rho {\mathbf{u}})} \right], \hfill \\
 \frac{1}{{b^2 }}\frac{{\partial ^2 {\mathbf{E}}}}{{\partial \tau ^2 }} - {\text{ }}\nabla ^2 {\kern 1pt}  \cdot {\mathbf{E}} +  \left[ \frac{{\ddot b}}
{{2b^3 }}-{\frac{{3\dot b^2 }}{{4b^4 }}  } \right]{\mathbf{E}} =  - \frac{{c^{1/2} }}{{b^{1/2} }}\nabla (4\pi \rho ) - \frac{{c^{1/2} }}{{b^{3/2}}}\frac{\partial}{{\partial \tau }}\left[ {\frac{{4\pi (\rho {\mathbf{u}})}}{b}} \right]. \hfill \\ 
\end{gathered} 
\eeqn
This is the Klein-Gordon equation with an effective mass $\mu$ given by
\beqn
\mu  = \left\{ {\frac{{\hbar ^2 }}{{c^2 }}\left[ {\frac{{\ddot b}}
{{2b^3 }} - \frac{{3\dot b^2 }}{{4b^4 }}} \right]} \right\}^{1/2}  = \left\{ {\frac{{\hbar ^2 }}{{c^2 }}\left[ {\frac{{{\mathbf{u}} \cdot {\mathbf{\ddot u}} + {\mathbf{\dot u}}^2 }}  {{2b^4 }} - \frac{{5\left( {{\mathbf{u}} \cdot {\mathbf{\dot u}}} \right)^2 }}
{{4b^6 }}} \right]} \right\}^{1/2}. 
\eeqn
Let $({\bf{x}}(\tau),\tau)$ represent the field position and $(\bar{\bf{x}}(\tau'),\tau')$ the retarded position of a source charge $e$, with ${\bf r}={ \bf{x}}-\bar{\bf{x}}$.  If we set $r=\left|{ \bf{x}}-\bar{\bf{x}}\right|$, $s=r-(\tfrac{({\bf r}\cdot{\bf u})}{b})$, and ${\bf r}_{\bf u}={\bf r}-{\tfrac{r}{b}}{\bf u}$, then we were able to compute the $\bf E$ and $\bf B$ fields directly in [18]  to obtain:
\beqn
{\mathbf{E}}({\mathbf{x}},\tau ) = \frac{{e\left[ {{\mathbf{r}}_{\mathbf{u}} (1 - {{{\mathbf{u}}^2 } \mathord{\left/
 {\vphantom {{{\mathbf{u}}^2 } {b^2 }}} \right.
 \kern-\nulldelimiterspace} {b^2 }})} \right]}}
{{s^3 }} + \frac{{e\left[ {{\mathbf{r}} \times ({\mathbf{r}}_{\mathbf{u}}  \times {\mathbf{a}})} \right]}}
{{b^2 s^3 }} + \frac{{e({\mathbf{u}} \cdot {\mathbf{a}})\left[ {{\mathbf{r}} \times ({\mathbf{u}} \times {\mathbf{r}})} \right]}}
{{b^4 s^3 }}
\eeqn
and
\[
{\mathbf{B}}({\mathbf{x}},\tau ) = \frac{{e\left[ {({\mathbf{r}} \times {\mathbf{r}}_{\mathbf{u}} )(1 - {{{\mathbf{u}}^2 } \mathord{\left/
 {\vphantom {{{\mathbf{u}}^2 } {b^2 }}} \right.
 \kern-\nulldelimiterspace} {b^2 }})} \right]}}
{{rs^3 }} + \frac{{e{\mathbf{r}} \times \left[ {{\mathbf{r}} \times ({\mathbf{r}}_{\mathbf{u}}  \times {\mathbf{a}})} \right]}}
{{rb^2 s^3 }} + \frac{{er({\mathbf{u}} \cdot {\mathbf{a}})({\mathbf{r}} \times {\mathbf{u}})}}
{{b^4 s^3 }}.
\]
(It is easy to see that $\bf B$ is orthogonal to $\bf E$.)  The first two terms in the above equations are standard, in the $({\bf{x}}(t),{\bf{w}}(t))$ variables.  The third part of both equations is new and arises because of the dissipative term in our wave equation.   It is easy to see that ${\bf r}\times({\bf u}\times{\bf r})=r^2{\bf u}-({\bf u}\cdot{\bf r}){\bf r}$, so we get a component along the direction of motion. (Thus,  the $\bf E$ field has a  longitudinal part.) This confirms our claim that the new dissipative term is equivalent to an effective mass that arises due to the collaborative acceleration of the particle.   This means that the cause for radiation reaction comes directly from the use of the local clock to formulate Maxwell's equations.  Thus, in this approach, there is no need to assume advanced potentials, self-interaction, mass renormalization along with the Lorentz-Dirac equation in order to account for it (radiation reaction), as has been required when the observer clock is used. Furthermore, no assumptions about the structure of the source are required.  
\subsection{\bf Proper-time Group} 
We now identify the new  (proper-time) transformation group that preserves the first postulate of the special theory. The standard (Lorentz) time transformations between two inertial observers can be written as
\beqn
t' = \gamma ({\mathbf{v}})\left[ {t - {{{\mathbf{x}} \cdot {\mathbf{v}}} \mathord{\left/ {\vphantom {{{\mathbf{x}} \cdot {\mathbf{v}}} {c^2 }}} \right.
 \kern-\nulldelimiterspace} {c^2 }}} \right], \quad \quad \quad \quad {\text{      }}t = \gamma ({\mathbf{v}})\left[ {t' + {{{\mathbf{x'}} \cdot {\mathbf{v}}} \mathord{\left/
 {\vphantom {{{\mathbf{x'}} \cdot {\mathbf{v}}} {c^2 }}} \right. \kern-\nulldelimiterspace} {c^2 }}} \right].
\eeqn
We want to replace $t,\;t'$ by $\tau$.  To do this, use the relationship between $dt$ and $d\tau$ to get:
\beqn
t = \tfrac{1}{c}\int_0^\tau  {b(s)} ds = \tfrac{1}{c}\bar b\tau ,\quad  t' = \tfrac{1}{c}\int_0^\tau  {b'(s)} ds = \tfrac{1}{c}\bar b'\tau, 
\eeqn
where we have used the mean value theorem of calculus to obtain the end result, so that both $\bar b$ and $\bar b'$ represent an earlier $\tau$-value of $b$ and $b'$ respectively.  Note that, as $b$ and $b'$ depend on $\tau$, the transformations (9) represent explicit nonlinear relationships between $t,\;t'$ and $\tau$ (during interaction).  (This is to be expected in the general case when the system is acted on by external forces.)   If we set
\[
{\mathbf{d}}^ *   = {{\mathbf{d}} \mathord{\left/ {\vphantom {{\mathbf{d}} {\gamma ({\mathbf{v}})}}} \right.\kern-\nulldelimiterspace} {\gamma ({\mathbf{v}})}} - (1 - \gamma ({\mathbf{v}}))\left[ {({{{\mathbf{v}} \cdot {\mathbf{d}})} \mathord{\left/
{\vphantom {{{\mathbf{v}} \cdot {\mathbf{d}})} {(\gamma ({\mathbf{v}}){\mathbf{v}}^2 }}} \right. \kern-\nulldelimiterspace} {(\gamma ({\mathbf{v}}){\mathbf{v}}^2 }})} \right]{\mathbf{v}},
\] 
we can write the transformations that fix $\tau$ as:
\beqn
\begin{gathered}
\quad {\mathbf{x'}} = \gamma ({\mathbf{v}})\left[ {{\mathbf{x}}^ *   - ({{\mathbf{v}} \mathord{\left/{\vphantom {{\mathbf{v}} c}} \right.\kern-\nulldelimiterspace} c})\bar b\tau } \right],\quad \quad \quad \quad \,{\mathbf{x}} = \gamma ({\mathbf{v}})\left[ {{\mathbf{x'}}^*   + ({{\mathbf{v}} \mathord{\left/{\vphantom {{\mathbf{v}} c}} \right.\kern-\nulldelimiterspace} c})\bar b'\tau } \right], \hfill \\
\quad {\mathbf{u'}} = \gamma ({\mathbf{v}})\left[ {{\mathbf{u}}^ *   - ({{\mathbf{v}} \mathord{\left/{\vphantom {{\mathbf{v}} c}} \right.\kern-\nulldelimiterspace} c})b} \right],\quad \quad \quad \quad \; \,{\text{   }}{\mathbf{u}} = \gamma ({\mathbf{v}})\left[ {{\mathbf{u'}}^*   + ({{\mathbf{v}} \mathord{\left/ {\vphantom {{\mathbf{v}} c}} \right. \kern-\nulldelimiterspace} c})b'} \right], \hfill \\
\quad {\mathbf{a'}} = \gamma ({\mathbf{v}})\left\{ {{\mathbf{a}}^*  - {\mathbf{v}}\left[ {({{{\mathbf{u}} \cdot {\mathbf{a}})} \mathord{\left/{\vphantom {{{\mathbf{u}} \cdot {\mathbf{a}})} {(bc}}} \right.\kern-\nulldelimiterspace} {(bc}})} \right]} \right\},\quad {\text{   }}{\mathbf{a}} = \gamma ({\mathbf{v}})\left\{ {{\mathbf{a'}}^*   + {\mathbf{v}}\left[ {({{{\mathbf{u'}} \cdot {\mathbf{a'}})} \mathord{\left/{\vphantom {{{\mathbf{u'}} \cdot {\mathbf{a'}})} {(b'c}}} \right. \kern-\nulldelimiterspace} {(b'c}})} \right]} \right\}. \hfill \\ 
\end{gathered} 
\eeqn
If we put equation (9) in (8), differentiate with respect to $\tau$  and cancel the extra factor of $c$, we get the transformations between $b$ and $b'$:
\beqn
b'(\tau ) = \gamma ({\mathbf{v}})\left[ {b(\tau ) - {{{\mathbf{u}} \cdot {\mathbf{v}}} \mathord{\left/{\vphantom {{{\mathbf{u}} \cdot {\mathbf{v}}} c}} \right. \kern-\nulldelimiterspace} c}} \right],\quad \quad \quad \quad {\text{ }}b(\tau ) = \gamma ({\mathbf{v}})\left[ {b'(\tau ) + {{{\mathbf{u'}} \cdot {\mathbf{v}}} \mathord{\left/
 {\vphantom {{{\mathbf{u'}} \cdot {\mathbf{v}}} c}} \right. \kern-\nulldelimiterspace} c}} \right].
\eeqn
Equations (10) in (11) provide an explicit nonlinear representation of the Lorentz group, which uses the local clock to describe the dynamics of the system and preserves the first postulate of the special theory (the only one that really matters). We call this group the proper-time group.  

It was shown in [18] that Maxwell's equations transform in the same way as in the conventional theory.  However, the charge and current density have the following transformations:  
\beqn
 {\bf J}'={\bf J}+(\gamma -1){{({\bf J}\cdot {\bf v})} \over {{\bf
v}^2}}v-\gamma {b \over c}\rho {\bf v},  
\eeqn   
\beqn
b'\rho' =\gamma ({\bf
v})\left[ {b\rho -({{{\bf J}\cdot {\bf v}} \mathord{\left/ {\vphantom {{{\bf J}\cdot
{\bf v}} c}} \right.\kern-\nulldelimiterspace} c})} \right]. 
\eeqn  
Using the first equation of (11) in (13), we get:  
\beqn
 \rho' ={{\rho -({{{\bf
J}\cdot {\bf v}}\mathord{\left/ {\vphantom {{{\bf J}\cdot {\bf v}} {bc}}}\right.
\kern-\nulldelimiterspace} {bc}})}\over {1-({{{\bf u}\cdot {\bf v}} \mathord{\left/
{\vphantom {{{\bf u}\cdot {\bf v}} {bc}}} \right. \kern- \nulldelimiterspace}
{bc}})}}.  
\eeqn   
This result is different from the standard one, (which we obtain if we set $b'=b=c$ in (13)),  
$$ 
\rho' =\gamma ({\bf v})\left[ {\rho
-({{{\bf J}\cdot {\bf v}} \mathord{\left/ {\vphantom {{{\bf J}\cdot {\bf v}} {c^2}}}
\right.\kern-\nulldelimiterspace} {c^2}})} \right]. 
$$    
Furthermore, if we insert the expression ${{\bf J}/c}={\bf \rho} ({{\bf
u}/b})$ in (14); 
we obtain  
\beqn
\rho' =\rho {{1-({{{\bf u}\cdot {\bf v}} \mathord{\left/ {\vphantom {{{\bf u}\cdot
{\bf v}} {b^2}}} \right. \kern-\nulldelimiterspace} {b^2}})} \over {1-({{{\bf u}\cdot
{\bf v}} \mathord{\left/ {\vphantom {{{\bf u}\cdot {\bf v}} {bc}}} \right. \kern-
\nulldelimiterspace} {bc}})}}. 
\eeqn 
In order to see the impact of equation (15), suppose that a spherical charge distribution is at rest in the unprimed frame. From (15), we see that ${\bf u}=0$, so that $\rho'=\rho$.  Since the primed frame is arbitrary, the charge distribution will appear spherical to all observers.  This is what we would expect on physical grounds, so that relatively moving frames should not change the symmetry properties of charged objects.  In particular, a spherical charge distribution in one frame should not be distorted in another moving frame (i.e., display physical effects due to another observer's relative motion).   
\subsection{\bf Canonical Proper-Time Particle Theory}
We now investigate the corresponding particle theory.  Since we desire complete compatibility with quantum theory, it is natural to require that  any change from the observer clock to the local clock of the observed system be a canonical change of variables. The key concept to our approach may be seen by examining the time evolution of a dynamical parameter $W({\bf{x}},{\bf{p}})$, via the standard formulation of classical mechanics, described in terms of the Poisson brackets:
\beqn
\frac{{dW}}{{dt}} = \left\{ {H,W} \right\}.
\eeqn
We can also represent the dynamics using the proper (or local) time of the system by using the representation $d\tau  = ({1 \mathord{\left/{\vphantom {1 \gamma }} \right. \kern-\nulldelimiterspace} \gamma })dt = ({{mc^2 } \mathord{\left/ {\vphantom {{mc^2 } H}} \right. \kern-\nulldelimiterspace} H})dt$,
so that:
\[
\frac{{dW}}{{d\tau }} = \frac{{dt}}{{d\tau }}\frac{{dW}}{{dt}} = \frac{H}
{{mc^2 }}\left\{ {H,W} \right\}.
\]
Assuming a well-defined (invariant) rest energy ($mc^2$) for the system, we determine the canonical proper-time Hamiltonian $K$ such that:
\[
\left\{ {K,W} \right\} = \frac{H}{{mc^2 }}\left\{ {H,W} \right\},\quad \left. K \right|_{{\mathbf{p}} = 0}  = \left. H \right|_{{\mathbf{p}} = 0}  = mc^2. 
\]
Using 
\[
\begin{gathered}
  \left\{ {K,W} \right\} = \left[ {\frac{H}{{mc^2 }}\frac{{\partial H}}{{\partial {\mathbf{p}}}}} \right]\frac{{\partial W}}{{\partial {\mathbf{x}}}} - \left[ {\frac{H}
{{mc^2 }}\frac{{\partial H}}{{\partial {\mathbf{x}}}}} \right]\frac{{\partial W}}
{{\partial {\mathbf{p}}}} \hfill \\
  {\text{          }} = \frac{\partial }{{\partial {\mathbf{p}}}}\left[ {\frac{{H^2 }}
{{2mc^2 }} + a} \right]\frac{{\partial W}}{{\partial {\mathbf{x}}}} - \frac{\partial }
{{\partial {\mathbf{x}}}}\left[ {\frac{{H^2 }}{{2mc^2 }} + a'} \right]\frac{{\partial W}}
{{\partial {\mathbf{p}}}}, \hfill \\ 
\end{gathered} 
\]
we get that $a = a' = \tfrac{1}{2}mc^2$, so that (assuming no explicit time dependence)
\[
K = \frac{{H^2 }}{{2mc^2 }} + \frac{{mc^2 }}{2},\quad {\text{and }}\quad \frac{{dW}}
{{d\tau }} = \left\{ {K,W} \right\}.
\] 
Since $\tau$ is invariant during interaction (minimal coupling), we make the natural assumption that (the form of) $K$ also remains invariant.  Thus, if $\sqrt {c^2 {\mathbf{p}}^2  + m^2 c^4 }  \to \sqrt {c^2 {\bs{\pi }}^2  + m^2 c^4 }  + V$, where  $\bf A$ a vector potential, $V$ is a potential energy term and  $\pi  = {\mathbf{p}} - \tfrac{e}
{c}{\mathbf{A}}$.  In this case, $K$ becomes:
\[
K=\frac{{ {\bs{\pi }}^2}}
{{2m}} + mc^2  + \frac{{V^2 }}
{{2mc^2 }} + \frac{{V\sqrt {c^2  {\bs{\pi }}^2+ m^2 c^4 } }}
{{mc^2 }}.
\]
If we set $H_0=\sqrt {c^2 {\bs{\pi }}^2  + m^2 c^4 }$, use standard vector identities with $H_0=mcb$, $\nabla \times \bs\pi=-\tfrac{e}{c}\bf{B}$, and compute Hamilton's equations, we get:
\beqn
\begin{gathered}
  {\bf{u}} = \frac{{d{\bf{x}}}}
{{d\tau }} = \left[ {1 + \frac{V}
{{H_0 }}} \right]\frac{\bs\pi }
{m} = \frac{\bs \pi }
{{\tilde m}}, \; {\tilde m}=\left[ {1 + \frac{V}
{{H_0 }}} \right]^{-1}m, \hfill \\
  \frac{{d{\bf{p}}}}
{{d\tau }} =  - \frac{{\left[ {\left( {\bs \pi  \cdot \nabla } \right) \bs \pi  + \tfrac{e}
{c}\bs \pi  \times {\bf{B}}} \right]}}
{m}\left[ {1 + \frac{V}
{{H_0 }}} \right] - \nabla V\frac{{H_0 }}
{{mc^2 }}\left[ {1 + \frac{V}
{{H_0}}} \right] \hfill \\
  \quad \quad \quad  = \tfrac{e}
{c}  \left( {{\bf{u}} \cdot \nabla } \right) {\bf{A}}  + \tfrac{e}
{c}{\bf{u}} \times {\bf{B}} - \nabla V\frac{b}
{c}\left[ {1 + \frac{V}
{{H_0 }}} \right]. \hfill \\ 
\end{gathered} 
\eeqn
We remark that one can view $\tilde{m}$ as a (finite) renormalization of $m$ which occurs the moment that the potential is turned on.  This may seem strange to one not familiar with the history of this subject.  An excellent discussion of renormalization from a historical perspective can be found in the article by Dresden [19].  
 
In order to see the impact of our condition that $K$ remains invariant during interaction in another way, compute the Lagrangian from $Ld\tau  = {\bf{p}} \cdot d{\bf{x}} - Kd\tau $, to get:
\[
L = \tilde m{\bf{u}}^2  - \frac{{\tilde m{\bf{u}}^2 }}
{2}\left( {\frac{{\tilde m}}
{m}} \right) - mc^2  - \frac{{V^2 }}
{{2mc^2 }} - V\left( {\frac{b}
{c}} \right) + \frac{e}
{c}{\bf{A}} \cdot {\bf{u}}.
\]
However, if we use the fact that ${\bs{\pi}}= \tilde{m}{\bf{u}}$ directly in $H_0$, we get the implicit relation $b=\sqrt{c^2+\frac{  {\tilde{m}}^2{\bf{u}}^2  }{m^2}}$.  If we use this in our equation for the metric, we get (in spherical coordinates):
\[
\begin{gathered}
  dt^2  = \left[ {1 + \frac{{{\bf{u}}^2 }}
{{c^2 \left[ {1 + \tfrac{V}
{{H_0}}} \right]^2 }}} \right]d\tau ^2  \Rightarrow  \hfill \\
  c^2 dt^2  = c^2 d\tau ^2  + \frac{{d{\bf{x}}^2 }}
{{\left[ {1 + \tfrac{V}
{{H_0 }}} \right]^2 }} = c^2 d\tau ^2  + \frac{{dr^2 }}
{{\left[ {1 + \tfrac{V}
{{H_0 }}} \right]^2 }} + \frac{{r^2 d\theta ^2 }}
{{\left[ {1 + \tfrac{V}
{{H_0}}} \right]^2 }} + \frac{{r^2 \sin ^2 \theta d\phi ^2 }}
{{\left[ {1 + \tfrac{V}
{{H_0 }}} \right]^2 }}. \hfill \\ 
\end{gathered} 
\]
Thus, we see that the metric becomes deformed in the presence of a potential (i.e., geometry is created by physics).

If we multiply equation (17) by $b/c$, compute the total derivative of $\bf{A}$ with respect to $\tau$ and use the definition of $\bf{E}$, with $V = e\Phi $, we have:
\beqn 
 \begin{gathered}
  \frac{c}
{b}\left[ {\frac{{d{\mathbf{p}}}}
{{d\tau }} - \frac{e}
{c}\frac{{d{\mathbf{A}}}}
{{d\tau }}} \right] =  - \frac{e}
{b}\frac{{\partial {\mathbf{A}}}}
{{\partial \tau }} + \tfrac{e}
{b}{\mathbf{u}} \times {\mathbf{B}} - e\nabla \Phi \left[ {1 + \frac{V}
{{mcb }}} \right] \hfill \\
  \quad \quad \quad \quad \quad \quad  = e{\mathbf{E}} + \tfrac{e}
{b}{\mathbf{u}} \times {\mathbf{B}} -e \nabla \Phi \frac{V}
{{mcb }}. \hfill \\ 
\end{gathered}
\eeqn
It has been observed by Feynman [20] that, although there is experimental evidence for the existence of electromagnetic mass, the conventional theory ``falls  on its face" in accounting for this mass ..., ``because it does not produce a consistent theory--and the same is true for quantum modifications". 

The last term in equation (18) is an addition to the Lorentz force, with the opposite sign of $-\nabla \Phi $, which appears in $\bf{E}$.  In order to see the physical meaning of the term, assume an interaction between a proton and an electron, where $\bf{A}=0$ and $V$ is the Coulomb interaction, so that (17) and (18) become:
\beqa 
\begin{gathered}
  {\mathbf{u}} = \left[ {1 + \frac{V}
{{mc^2 }}} \right]\frac{\pi }
{m}, \hfill \\
  \frac{c}
{b}\frac{{d{\mathbf{p}}}}
{{d\tau }} =  - \nabla V - \nabla V\frac{V}
{{mcb }}. \hfill \\ 
\end{gathered} 
\eeqa
If we treat $\bf{u}$ as (approximately) ${\bf{p}}/m$ and set $b=c$, we get that
\[
m{\bf{a}} =  - \nabla V - \nabla V\frac{V}
{{mc^2 }}.
\]
In this case, it is easy to show that the classical electron radius, $r_0$, is a critical point (i.e., $-\nabla \Phi   -\nabla \Phi (V/mc^2)=0$).  Thus, for $0< r< r_0$, the force becomes repulsive.  We interpret this as a fixed region of repulsion, so that the singularity $r=0$ is impossible to reach at the classical level.  The neglected terms are attractive but of lower order.  This makes the critical point slightly less than $r_0$.  Thus, in general, the electron experiences a strongly repulsive force when it gets too close to the proton.  This means that the classical principle of impenetrability, namely that no two particles can occupy the same space at the same time occurs naturally. Furthermore, this analysis shows conclusively that  information about the (classical) structure of the particle is not required in the canonical proper-time theory.  

Finally, It is clear that the neglect of second order terms gives us the non-relativistic theory. 
\section{\bf Many-Particle Case}
Once it was agreed that the ``correct formulation of" relativistic classical mechanics should be invariant under the Lorentz group, work on this problem was generally ignored until after World War Two when it was realized that quantum theory did not solve the open problems of classical electrodynamics. In particular, it was first noticed that the canonical center-of-mass is not the three-vector part of a four-vector (see Pryce [21]).  The well-known no-interaction theorem of Currie, Jordan and Sudarshan [22] shows that it is impossible to construct a (interacting) relativistic many-particle theory that allows covariance and independent particle world-lines. (For a discussion of this and all known problems, see  [18].) Thus, the four-vector approach ``falls on its face" for more than one particle.  In this section we construct a consistent classical (relativistic) many-particle theory that is quantizable and includes  Newtonian mechanics.

Suppose we have a closed system of $n$ interacting particles with Hamiltonians $H_i $ and total Hamiltonian $H$.  We assume that $H$ is of the form $H = \sum\limits_{i = 1}^n {H_i }$.  If we define the effective mass $M$ and total momentum $ {\bf P}$ by
\[ 
Mc^2  = \sqrt {H^2  - c^2 {\bf P}^2 },\quad {\bf P} = \sum\limits_{i = 1}^n {{\bf
p}_i },  
\]          
then $H$ also has the representation
$H = \sqrt {c^2 {\bf P}^2  + M^2 c^4 }$. To construct the many-particle theory, we observe that the representation $ d\tau  = (Mc^2 /H)dt$
does not depend on the number of particles in the system and is an invariant for all observers (see [18]).  Thus, we can uniquely define the  proper-time of the system for all observers.    If we let ${\bf L}$  be the boost (generator of pure Lorentz transformations) and define the total angular momentum ${\bf J}$ by   
$$ 
{\bf J} = \sum\limits_{i = 1}^n {{\bf x}_i  \times {\bf p}_i } ,   
$$  
we then have the following Poisson Bracket relations characteristic of the Lie algebra for
the Poincar\' e group (when we use  the observer proper-time):   
$$ 
{d{\bf  P}
\over{dt} }= \left\{ {H,{\bf P}} \right\} = {\bf 0}  \qquad  {d{\bf J} \over {dt}} =
\left\{ {H,{\bf J}} \right\} = {\bf 0}  \qquad  \left\{ {P_i ,P_j } \right\} = 0   
$$  
$$ 
{
\left\{ {J_i ,P_j } \right\} = \varepsilon _{ijk} P_k  }  \qquad  { \left\{ {J_i ,J_j }
\right\} = \varepsilon _{ijk} J_k } \qquad  { \left\{ {J_i ,L_j } \right\} = \varepsilon
_{ijk} L_k }    § 
$$  
$$ 
{ {d{\bf  L}\over dt}
 = {\left\{ H,{\bf L} \right\} } = - {\bf P}  }  \qquad  { \left\{ {P_i , L_j } \right\} =  -{
\delta _{ij}} H/c^2, } \qquad  { \left\{ {L_i ,L_j } \right\} =  - {\varepsilon _{ijk}}
J_k /c^2. }  
$$  
It is easy to see that $M$ commutes with $H$, ${\bf P}$,  and ${\bf J}$, and to show
that $M$ commutes with ${\bf L}$. Constructing $K$ as in the one-particle case, we
have  
$$ 
K = {{H^2 } \over {2Mc^2 }} + {{Mc^2 } \over 2} = {{{\bf P}^2 } \over
{2M}} + Mc^2 . 
$$					 
Thus, we can use the same definitions for ${\bf P}$, ${\bf J}$, and ${\bf L}$  to
obtain our new commutation relations: 
$$ 
{{d{\bf P}} \over {d\tau }} = \left\{ {K,{\bf P}} \right\} = {\bf 0}, \quad {{d{\bf
J}} \over {d\tau }} = \left\{ {K,{\bf J}} \right\} = {\bf 0}, \quad \left\{ {P_i ,P_j }
\right\} = 0, 
$$ 
$$ 
\left\{ {J_i ,P_j } \right\} = \varepsilon _{ijk} P_k, 
\quad
\left\{ {J_i ,J_j } \right\} = \varepsilon _{ijk} J_k , \quad \left\{ {J_i ,L_j } \right\} =
\varepsilon _{ijk} L_k,  
$$ 
$$ 
{{d{\bf L}} \over {d\tau }} = \left\{
{K,{\bf L}} \right\} = {{ - H} \over {Mc^2 }}{\bf P}, \quad \left\{ {P_i ,L_j } \right\}
=  - \delta _{ij} H/c^2,  \quad \left\{ {L_i ,L_j } \right\} =  - \varepsilon _{ijk} J_k
{\kern 1pt} /{\kern 1pt} c^2.   
$$			 
It follows that, except for a constant
scale change, the inhomogeneous proper-time group is generated by the same Lie algebra as the  Poincar\'{e} group. This result is not surprising given the close relation between the two groups.  It
also proves our earlier  statement that the form of $K$ is fully relativistic.

Let the map from $( {\bf x}_i, t)\, \to \,({\bf x}_i, \tau)$ be denoted by ${\bf
C}[\,t,\,\tau]$, and let  ${\bf P}(O', O)$ be the Poincar\' e map from $O \to O'$.
\begin{thm}  The proper-time coordinates of the system
as seen by an observer at $O$ are related to those of an observer  at $O'$ by the
transformation: 
$$ 
{\bf R}_M[\tau ]={\bf C}[ \,t',\,\tau ]{\bf P}(O',\,O){\bf C}^{-1}[ \,t,\,\tau ]. 
$$
\end{thm}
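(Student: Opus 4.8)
The plan is to read the statement as the commutativity of a square of maps and to prove it by a coordinate chase, the only substantive input being the observer-independence of the system proper time already established above. The four corners of the square are: the $t$-description of the closed $n$-particle system recorded by $O$; the $t'$-description recorded by $O'$; the $\tau$-description recorded by $O$; and the $\tau$-description recorded by $O'$. The top edge is the Poincar\'e map $\mathbf{P}(O',O)$, the left edge is the clock change $\mathbf{C}[\,t,\tau]$, the right edge is $\mathbf{C}[\,t',\tau]$, and the bottom edge is the sought map $\mathbf{R}_M[\tau]$; the theorem asserts that the square commutes, i.e. $\mathbf{R}_M[\tau]=\mathbf{C}[\,t',\tau]\,\mathbf{P}(O',O)\,\mathbf{C}^{-1}[\,t,\tau]$.

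First I would record the two facts that make the right-hand side meaningful. \emph{(i)} $\mathbf{C}[\,t,\tau]$ is a bijection: on the phase-space variables it is the identity, since passing from $(\mathbf{x}_i(t),\mathbf{w}_i(t))$ to $(\mathbf{x}_i(\tau),\mathbf{u}_i(\tau))$ leaves $\mathbf{p}_i$ unchanged (this is the point of constructing $K$ as a canonical change of variables), while on the time label it is the strictly monotone reparametrisation $d\tau=(Mc^2/H)\,dt$, hence invertible, with $\mathbf{C}^{-1}[\,t,\tau]$ the inverse reparametrisation $dt=(H/Mc^2)\,d\tau$; being canonical, the composite will be canonical too. \emph{(ii)} The parameter $\tau$ is the same for $O$ and $O'$: the representation $d\tau=(Mc^2/H)\,dt$ does not depend on particle number and is invariant for all observers, so both vertical edges of the square terminate on one and the same $\tau$-axis. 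This is the only nontrivial ingredient; without it $\mathbf{C}[\,t',\tau]$ and $\mathbf{C}^{-1}[\,t,\tau]$ would be attached to incompatible parametrisations and the conjugation would be meaningless. I expect this to be the main point to get right rather than a deep obstacle, since it is precisely the content of the earlier assertion that the proper time of the system is well defined for all observers.

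With \emph{(i)} and \emph{(ii)} in place the argument is a diagram chase. Take the $\tau$-coordinates of the system as recorded by $O$; apply $\mathbf{C}^{-1}[\,t,\tau]$ to recover $O$'s $t$-coordinates (legitimate by \emph{(i)}); apply $\mathbf{P}(O',O)$ to obtain $O'$'s $t'$-coordinates; apply $\mathbf{C}[\,t',\tau]$ to obtain $O'$'s $\tau$-coordinates, which by \emph{(ii)} carry the same label $\tau$. By construction the composite carries the proper-time description seen by $O$ to the proper-time description seen by $O'$, which is exactly what $\mathbf{R}_M[\tau]$ denotes; hence $\mathbf{R}_M[\tau]=\mathbf{C}[\,t',\tau]\,\mathbf{P}(O',O)\,\mathbf{C}^{-1}[\,t,\tau]$. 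Because the three factors are invertible, this composite is moreover the \emph{unique} map closing the square, so the transformation is unambiguous.

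As a consistency check I would confirm that this abstract conjugation reproduces the explicit transformation laws already derived: inserting $t=\bar b\tau/c$ and $t'=\bar b'\tau/c$ into the Lorentz relations, differentiating in $\tau$ and cancelling the factor $c$ should recover equations (10) and (11) in the single-source case (and their $n$-particle analogues, now with $M$ in place of $m$ via $d\tau=(Mc^2/H)\,dt$). The one delicacy is that $\mathbf{C}[\,t,\tau]$ and $\mathbf{C}[\,t',\tau]$ are genuinely distinct maps, built from $\bar b$ and $\bar b'$ respectively, which is the nonlinearity flagged after equation (9); the chase never asks them to coincide, only that each is invertible and that they share the common $\tau$, so it goes through as stated.
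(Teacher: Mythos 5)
Your proposal takes essentially the same route as the paper: the paper's entire proof is the assertion that the square with corners $O(\{{\bf x}_i\},t)$, $O'(\{{\bf x}'_i\},t')$, $O(\{{\bf x}_i\},\tau)$, $O'(\{{\bf x}'_i\},\tau)$ commutes, which is exactly the diagram chase you carry out. Your version is correct and in fact more explicit than the paper's, since you isolate the two ingredients (invertibility of the clock maps and observer-independence of $\tau$) that the paper leaves implicit.
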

\begin{proof}The proof follows since the diagram below is
commutative.   
\[
 \begin{matrix} {O(\{{\bf x}_i \},\, t)}  &  {\rm  {}} & { \longrightarrow} 
&  {\rm  {}} & O'(\{{\bf x'}_i \},\, t') \cr &  {\rm  {}}&  {\rm  {}}&  {\rm  {}}& 
{\rm  {}}&  {\rm  {}}\cr &  {\rm  {}}&  {\rm  {}}&  {\rm  {}}&  {\rm  {}}&  {\rm 
{}}\cr {{{\bf C}^{-1}[\,t,\,\tau]}} &   \Bigg\uparrow &   {\rm  {}} & 
\Bigg\downarrow & {{\bf C}[\,t',\,\tau]}\cr &  {\rm  {}}&  {\rm  {}}&  {\rm  {}}& 
{\rm  {}}&  {\rm  {}}\cr &  {\rm  {}}&  {\rm  {}}&  {\rm  {}}&  {\rm  {}}&  {\rm 
{}}\cr O(\{{\bf x}_i \},\, \tau)  &  {\rm  {}} & \longleftarrow &  {\rm  {}} &
O'(\{{\bf x'}_i \},\, \tau)
\end{matrix} 
\]
\end{proof}
The top diagram is the Poincar\'{e} map from $O \to O'$.  {\sl It is important to note that
this map is between  the coordinates of observers}.  In this sense, our approach may be
viewed as a direct generalization of the conventional theory.  In the global case, when
${\bf U}$ is constant, $t$ is related to $\tau$  by a scale transformation so that we
have a group with the same Lie algebra as the Poincar\'{e} group (up to a constant scale),
but it has an Euclidean metric.  In this case, Theorem 2 proves that ${\bf R}_M$ is
in the proper-time group formed by a similarity action on  the Poincar\'{e} group by the
canonical group ${\bf C}_\tau$.    On the other hand, Theorem 2 is true in general.
This means that in both the local and global cases (when the acceleration is nonzero)
$t$ is related to $\tau_i$ and $\tau$ via nonlocal (nonlinear) transformations.  It
follows that, in general, the group action is not linear, and hence is not covered by the
Cartan classification.  
 
Since $K$ does not depend on the center-of-mass position ${\bf X}$, it is easy to see
that  
\beqn 
{\bf U} = {{d{\bf X}} \over {d\tau }} = {{\partial K} \over {\partial {\bf
P}}} = {{\bf P} \over M} =  {1 \over M}\sum\limits_{i = 1}^n {m_i {\bf u}_i }, 
\eeqn 
where ${\bf u}_i  = d{\bf x}_i /d\tau _i $.  We can now define $b$ by 
\[ 
b = \sqrt {{\bf U}^2  + c^2 }  \Rightarrow H = Mcb. 
\]
Thus, we can represent the relationship between $d\tau$ and $dt$ as: 
\[ 
d\tau  = (c/b)dt.
\]
If we set ${\bf v}_i  = d{\bf x}_i /d\tau $, an easy calculation shows that   
\[
{\bf{u}}_i  = {{d{\bf x}_i } \over {d\tau _i }} = {{d\tau } \over {d\tau _i }}{{d{\bf x}_i }
\over {d\tau }} =  {{b_i } \over b}{\bf v}_i  \Rightarrow {{{\bf u}_i } \over {b_i }} =
{{{\bf v}_i } \over b}.  
\]
The velocity ${\bf v}_i $ is the one our observer sees when he uses the global
proper-clock of the system to compute  the particle velocity, while ${\bf u}_i $ is the
one seen when he uses the local proper clock of the particle to compute its velocity. 
Solving for ${\bf u}_i $ and $b_i $ in terms of ${\bf v}_i $ and $b$, we get  
\[ 
{\bf{u}}_i  = {{c{\bf v}_i } \over {\sqrt {b^2  - \mathop {\bf v}\nolimits_i^2 }}}, {\rm  
}b_i  =  {{cb} \over {\sqrt {b^2  - \mathop {\bf v}\nolimits_i^2 }}} \,\,\,  {\rm or }
\,\,\,  {{b_i } \over b} = {c \over {\sqrt {b^2  - \mathop {\bf v}\nolimits_i^2 } }}.  
\]
Note that, since $b^2  = {\bf U}^2  + c^2$, if ${\bf U}$ is not zero, then any ${\bf
v}_i $ can be larger than $c$.   On the other hand, if  ${\bf U}$ is zero, $b = c$ and,
from the global perspective, our theory looks like the  conventional one. 
Using (19),
we can rewrite ${\bf U}$ as  
\[ 
{\bf U} = {1 \over M}\sum\limits_{i = 1}^n {m_i
{\bf u}_i }  =  {1 \over M}\sum\limits_{i = 1}^n {{{m_i c{\bf v}_i } \over {\sqrt
{b^2  - \mathop {\bf v}\nolimits_i^2 } }}}  = {1 \over M}\sum\limits_{i = 1}^n
{{{b_i m_i {\bf v}_i } \over b}}  = {1 \over H}\sum\limits_{i = 1}^n {H_i {\bf v}_i
}. 
\] 
It follows that the position of the center-of-mass (energy) satisfies   
\[
{\bf X} = {1 \over H}\sum\limits_{i = 1}^n {H_i {\bf x}_i }  + {\bf Y},
\quad {d{\bf Y} \over d{\tau}}={\bf 0}.   
\]
It is natural to choose ${\bf Y}$ so that ${\bf X}$ is the canonical center of mass:   
\[
{\bf X} = {1 \over H}\sum\limits_{i = 1}^n {H_i {\bf x}_i }  +  {{c^2 ({\bf S} \times
{\bf P})} \over {H(Mc^2  + H)}}, 
\]		              
where  ${\bf S}$ is the (conserved) spin of the system (see [23]).  The important point is that 
${\rm (}{\bf X},{\bf P},\tau, K{\rm )}$ is the new set of (global) variables for the
system. 

As the system is closed, ${\bf U}$ is constant and $\tau $ is linearly related to $t$.
Yet, the physical interpretation is very different  if ${\bf U}$ is not zero. 
Furthermore, it is easy to see that, even if ${\bf U}$ is zero in one frame,
it will not be zero in any other frame which is in relative motion (see the next section).  It is clear that $\tau $
is uniquely determined by the particles in the system and is available to all observers.  Thus, it offers a unique definition of simultaneity for all events associated with the global system.  

In general, an individual particle in a large system may not be directly observable (e.g., a small planet near a large sun in a distant galaxy). On the other hand, if an individual particle is observable, we have another unique definition of simultaneity for all events associated with that particle.  Furthermore, if a  subsystem of particles is observable, the local proper clock of the subsystem offers yet another unique definition of simultaneity (for all events associated with it).  Thus, the convention used provides a unique definition of simultaneity. 

It should be noted that there is a basic relationship between the global
system clock and the clocks of the individual particles.  In order to derive this relationship, we return to our definition of the global Hamiltonian $K$ and let $W$ be
any observable.  Then  
\begin{align}
  & {{dW} \over {d\tau }} = \left\{ {K,W} \right\} = {H \over {Mc^2 }}\left\{ {H,W}
\right\} =  {H \over {Mc^2 }}\sum\limits_{i = 1}^n {\left\{ {H_i ,W} \right\}}   \cr 
  & \qquad = {H \over {Mc^2 }}\sum\limits_{i = 1}^n {{{m_i c^2 } \over {H_i
}}\left[ {{{H_i } \over {m_i c^2 } }\left\{ {H_i ,W} \right\}} \right]}  =
\sum\limits_{i = 1}^n {{{Hm_i } \over {MH_i }}\left\{ {K_i ,W} \right\}}.  
\end{align} 
Using the (easily derived) fact that $d\tau _i /d\tau  = Hm_i
/MH_i  = b_i /b$, we get 		  
\beqn 
{{dW} \over {d\tau }} = \sum\limits_{i = 1}^n
{{{d\tau _i } \over {d\tau }}\left\{ {K_i ,W} \right\}}. 
\eeqn           
Equation (21) allows us to relate the global systems dynamics to the local systems dynamics and  provides the basis for a direct approach to the quantum relativistic many-body problem using one (universal) wave function.  

We now show directly that the transformation, at the global level, is a canonical change of variables (time). 
\begin{thm}
There exists a function $S = S\left( {\{ {\bf{x}}_i \} ,\;\{ {\bf{p}}_i \} ,\;\tau } \right)$ such that
\[
\begin{gathered}
{\mathbf{P}} \cdot d{\mathbf{X}} - Hdt \equiv {\mathbf{P}} \cdot d{\mathbf{X}} - Kd\tau  + dS, \hfill \\
 \sum _{i = 1}^n {\mathbf{p}}_i  \cdot d{\mathbf{x}}_i  - \sum _{i = 1}^n H_i dt \equiv \sum _{i = 1}^n {\mathbf{p}}_i  \cdot d{\mathbf{x}}_i  - Kd\tau  + dS. \hfill \\ 
\end{gathered} 
\]
\end{thm}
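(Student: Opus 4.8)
The plan is to notice that both displayed identities collapse to a single scalar condition, and then to integrate that condition explicitly, with the conservation laws of the closed system doing the real work.

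First I would cancel the common symplectic terms. Subtracting $\mathbf{P}\cdot d\mathbf{X}$ from the first line, and subtracting $\sum_{i=1}^n\mathbf{p}_i\cdot d\mathbf{x}_i$ from the second, and using the standing hypothesis $H=\sum_{i=1}^n H_i$, one sees that \emph{both} identities are equivalent to the single requirement
\[
dS \;=\; K\,d\tau - H\,dt .
\]
So it is enough to exhibit one function $S=S(\{\mathbf{x}_i\},\{\mathbf{p}_i\},\tau)$ with this differential.

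Next I would eliminate $dt$ using the relation $d\tau=(Mc^2/H)\,dt$ established above for the closed system (the relation that, as emphasized, is independent of the particle number). This gives $H\,dt=(H^2/Mc^2)\,d\tau$, hence $K\,d\tau-H\,dt=\bigl(K-H^2/Mc^2\bigr)\,d\tau$. Inserting $K=\tfrac{H^2}{2Mc^2}+\tfrac{Mc^2}{2}$ together with the definition $M^2c^4=H^2-c^2\mathbf{P}^2$, the coefficient reduces to
\[
K-\frac{H^2}{Mc^2} \;=\; Mc^2-K \;=\; -\frac{\mathbf{P}^2}{2M},
\]
so the requirement becomes $dS=-\tfrac{\mathbf{P}^2}{2M}\,d\tau$.

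Finally I would integrate. Since the system is closed, $\mathbf{P}=\sum_i\mathbf{p}_i$ and $M$ are constants of the motion --- both commute with $H$ --- so $\mathbf{P}^2/2M$ is constant along every trajectory and one may take
\[
S \;=\; -\frac{\mathbf{P}^2}{2M}\,\tau \;=\; \bigl(Mc^2-K\bigr)\tau .
\]
This $S$ depends only on the momenta and on $\tau$, which is the asserted form (with no $\mathbf{x}_i$-dependence), and substituting it back reproduces both identities. The one place I would be careful in the write-up is the meaning of ``$dS$'': because $\mathbf{P}^2/2M$ is a genuine function on phase space rather than a number, $dS=-\tfrac{\mathbf{P}^2}{2M}\,d\tau$ is to be read as an identity of Poincar\'{e}--Cartan one-forms along the dynamical trajectories (equivalently, $S$ is the generating function of the canonical time change $t\mapsto\tau$), the potentially troublesome extra term $\tau\,d(\mathbf{P}^2/2M)$ being killed precisely by the conservation of $\mathbf{P}$ and $M$. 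That conservation --- i.e.\ the closedness of the system --- is the essential input, and is where I expect the only real subtlety to lie.
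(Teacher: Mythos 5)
Your proposal is correct and arrives at exactly the paper's choice $S=(Mc^2-K)\tau$ by exactly the paper's mechanism: the paper simply posits this $S$ and notes that conservation of $Mc^2$ and $K$ gives $dS=(Mc^2-K)\,d\tau$, while you derive the same function forward from $dS=K\,d\tau-H\,dt$ using $d\tau=(Mc^2/H)\,dt$. Your closing remark about the term $\tau\,d(\mathbf{P}^2/2M)$ being killed by conservation is precisely the ``easy calculation'' the paper leaves implicit, so the two arguments are essentially identical.
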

\begin{proof}
Set $S = [Mc^2  - K]\tau$.   An easy calculation, using the fact that both $Mc^2$ and $K$ are conserved quantities, shows that $dS=[Mc^2  - K]d{\tau}$.  An additional easy calculation gives the  result.
\end{proof}
It should be observed that, in a manner similar to that of Horwitz and Piron [24], we can formulate a dynamical principle which generalizes Hamilton's principle by using the integral invariant of Poincar\'{e}-Cartan (see Arnol'd [25]):  
\beqa
I = \oint_{\bf{C}} {\sum\nolimits_{i = 1}^n {{\bf{p}}_i }  \cdot d{\bf{x}}_i  - } Kd\tau, 
\eeqa
where $\bf C$ is a closed curve on extended phase space $\G= \G \left( {\{ {\bf{x}}_i \} ,\;\{ {\bf{p}}_i \} ,\;\tau } \right)$, and the above integral is invariant for arbitrary deformations of $\bf C$ along trajectories corresponding to solutions of the equations of motion.

A fundamental conclusion of this section is that, for any system of particles, we can always choose a unique observer-independent measure of time that is intrinsically  related to the local clocks of the individual particles.  (This is not true for any of the other attempted formulations of either a classical or quantum relativistic many-particle theory.) One important consequence of this result can be stated as a theorem.
\begin{thm}
Suppose that the observable universe is representable in the sense that the observed ratio of mass to total energy is constant and independent of our observed portion of the universe.  Then the universe has a unique clock that is available to all observers.
\end{thm}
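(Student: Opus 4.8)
The plan is to regard the observable universe as a (very large, effectively closed) system of interacting particles and to apply the many-particle construction developed above essentially verbatim. The only thing that construction requires is that the effective mass $M$, defined by $Mc^2 = \sqrt{H^2 - c^2\mathbf{P}^2}$, be a well-defined conserved quantity, so that the representation $H = \sqrt{c^2\mathbf{P}^2 + M^2c^4}$ holds and the one-form $d\tau = (Mc^2/H)\,dt$ makes sense. The representability hypothesis is tailored to supply exactly this: the assertion that the measured ratio $Mc^2/H$ takes the same value no matter which portion of the universe an observer samples is precisely the statement that $Mc^2/H$ is a globally well-defined (observer- and subsystem-independent) quantity, hence that $M$ itself is.

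First I would fix an inertial observer $O$ and form, over the observed portion, the total Hamiltonian $H=\sum_i H_i$, total momentum $\mathbf{P}=\sum_i \mathbf{p}_i$, and the effective mass $M$ as above, with $\mathbf{U}=\mathbf{P}/M$ and $b=\sqrt{\mathbf{U}^2+c^2}$ so that $c/b = Mc^2/H$. By the hypothesis this ratio is independent of which patch was used, so the one-form $d\tau = (Mc^2/H)\,dt$ on the world-history of the universe is unambiguous; integrating it from a fixed origin produces a function $\tau$ --- the candidate universal clock. As in the closed-system discussion, $\tau$ is built only from $H$, $\mathbf{P}$ and $M$, so it is uniquely determined by the matter content of the universe and does not refer to $O$'s rods or clock at all.

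Next I would verify observer-independence, i.e.\ that a second inertial observer $O'$ assigns the \emph{same} $\tau$ (up to the common choice of origin). Here the earlier structural results do the work: the representation $d\tau = (Mc^2/H)\,dt$ was already shown in [18] to be invariant for all observers, since $M$ commutes with $H$, $\mathbf{P}$, $\mathbf{J}$ and $\mathbf{L}$, and Theorem 2 (the commutative-diagram theorem) shows that the passage between observers' proper-time coordinates is implemented by $\mathbf{R}_M[\tau] = \mathbf{C}[t',\tau]\,\mathbf{P}(O',O)\,\mathbf{C}^{-1}[t,\tau]$, which acts on the $\mathbf{x}_i$ but fixes $\tau$. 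Hence $O$ and $O'$ parametrize the history of the universe by one and the same $\tau$; combined with the previous paragraph, this yields the claimed unique, universally available clock.

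The main obstacle --- and the reason the statement is phrased conditionally --- is making $H=\sum_i H_i$ and $M$ literally meaningful for ``the universe'': convergence of the sum, the status of gravitational binding energy, and cosmological expansion all threaten the existence of a single conserved $M$. The representability hypothesis is doing the heavy lifting: rather than constructing $M$, it postulates that the empirically accessible surrogate for $Mc^2/H$ behaves as though such an $M$ exists, being the same constant across all observed portions. Granting that postulate, everything else is bookkeeping with machinery already in place; without it, there is nothing on which to anchor a global $\tau$, which is exactly why the conclusion cannot be asserted unconditionally.
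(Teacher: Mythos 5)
Your proposal is correct and follows the same route the paper intends: the paper offers no separate proof of this theorem, presenting it as an immediate consequence of the preceding many-particle construction, in which the representability hypothesis guarantees that the ratio $Mc^2/H$ is globally well defined so that $d\tau = (Mc^2/H)\,dt$ yields a single observer-invariant clock (with invariance supplied by the commutativity of $M$ with the Poincar\'e generators and the commutative-diagram theorem). Your write-up simply makes that implicit argument explicit, so no substantive difference from the paper's approach.
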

The above assumption is equivalent to the  homogeneity and isotropy of the energy and mass density of the universe. 

In the study of physical systems one is sometimes not interested in the behavior of the global system, but only in some subsystem.  The cluster decomposition property is a requirement of any theory purporting to be a possible representation of the real world.  Basically this is the property that, if any two or more subsystems become widely separated, then they may be treated as independent systems (clusters).  
\begin{thm}
Suppose that our system of particles can be decomposed (by the observer) into two or more clusters.  Then there exists a unique (local) clock and corresponding canonical Hamiltonian for each cluster. 
\end{thm}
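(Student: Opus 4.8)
\noindent\emph{Proof strategy.} The plan is to exploit the fact, stressed repeatedly above, that the passage from the observer clock to the proper-time clock --- together with the construction of the canonical Hamiltonian $K$ --- does not depend on the number of particles, and to apply that construction to each cluster separately. Concretely, let the observer split the $n$ particles into clusters $\mathcal{C}_1,\ldots,\mathcal{C}_m$ and, for each $\alpha$, put
\[
H_\alpha=\sum_{i\in\mathcal{C}_\alpha}H_i,\qquad
{\bf P}_\alpha=\sum_{i\in\mathcal{C}_\alpha}{\bf p}_i,\qquad
M_\alpha c^2=\sqrt{H_\alpha^2-c^2{\bf P}_\alpha^2},
\]
together with ${\bf J}_\alpha=\sum_{i\in\mathcal{C}_\alpha}{\bf x}_i\times{\bf p}_i$ and the cluster boost ${\bf L}_\alpha$ obtained by restricting the sum defining ${\bf L}$ to $\mathcal{C}_\alpha$.

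First I would invoke the cluster decomposition property itself: once the $\mathcal{C}_\alpha$ are mutually widely separated the inter-cluster couplings drop out, so $H=\sum_\alpha H_\alpha$, each $H_\alpha$ is again of the admissible form (a sum of one-particle Hamiltonians plus intra-cluster interactions), and $H_\alpha$ generates the dynamics of its own cluster. In that regime the quadruple $(H_\alpha,{\bf P}_\alpha,{\bf J}_\alpha,{\bf L}_\alpha)$ satisfies, under Poisson brackets, the same Poincar\'{e}-algebra relations displayed above for $(H,{\bf P},{\bf J},{\bf L})$; in particular $\{L_{\alpha,i},P_{\alpha,j}\}=-\delta_{ij}H_\alpha/c^2$, so $(H_\alpha/c,{\bf P}_\alpha)$ transforms as a four-vector, $H_\alpha^2-c^2{\bf P}_\alpha^2$ is a Lorentz scalar, and --- by the same short computation that shows $M$ commutes with $H$, ${\bf P}$, ${\bf J}$ and ${\bf L}$ --- the cluster effective mass $M_\alpha$ commutes with all four cluster generators.

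Next I would copy the normalization argument used in the one- and many-particle cases. The representation $d\tau_\alpha=(M_\alpha c^2/H_\alpha)\,dt$ does not depend on the number of particles in $\mathcal{C}_\alpha$ and, since $M_\alpha c^2$ is invariant, defines the same quantity for every observer; this is the sought local clock of the cluster, and it is unique because $M_\alpha c^2$ is the unique Lorentz scalar agreeing with $H_\alpha$ at ${\bf P}_\alpha=0$. Demanding a canonical Hamiltonian $K_\alpha$ with $\{K_\alpha,W\}=(H_\alpha/M_\alpha c^2)\{H_\alpha,W\}$ and $K_\alpha|_{{\bf P}_\alpha=0}=H_\alpha|_{{\bf P}_\alpha=0}=M_\alpha c^2$ then forces, exactly as in the one-particle derivation (where the integration constants came out $a=a'=\tfrac{1}{2}mc^2$),
\[
K_\alpha=\frac{H_\alpha^2}{2M_\alpha c^2}+\frac{M_\alpha c^2}{2}
=\frac{{\bf P}_\alpha^2}{2M_\alpha}+M_\alpha c^2,
\]
which is the unique such Hamiltonian; applying the many-particle construction cluster by cluster then yields the variables $({\bf X}_\alpha,{\bf P}_\alpha,\tau_\alpha,K_\alpha)$ and, via $S_\alpha=[M_\alpha c^2-K_\alpha]\tau_\alpha$, the associated canonical-change-of-variables statement for each cluster.

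I expect the genuine obstacle to be the decomposition step rather than the algebra: one must argue carefully that in the separation limit $H_\alpha$ really does act as a closed-system Hamiltonian, that the residual inter-cluster terms do not spoil the conservation of $M_\alpha$ and $K_\alpha$ (which is what makes $K_\alpha$ a legitimate Hamiltonian and keeps $dS_\alpha=[M_\alpha c^2-K_\alpha]\,d\tau_\alpha$ exact), and that the cluster Poincar\'{e} generators truly close on the Poincar\'{e} algebra among themselves. Once each $\mathcal{C}_\alpha$ is certified to be an independent closed subsystem, the remaining content reduces verbatim to the many-particle case already established.
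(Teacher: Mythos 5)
Your proposal follows essentially the same route as the paper: identify the cluster Hamiltonians $H_\alpha$ and effective masses $M_\alpha c^2=\sqrt{H_\alpha^2-c^2{\bf P}_\alpha^2}$, define the local clock by $d\tau_\alpha=(M_\alpha c^2/H_\alpha)\,dt$, and build $K_\alpha$ by the one-particle normalization argument. The paper's own proof is just a terser version of this (it stops at exhibiting $M_\alpha$, $H_\alpha$ and $d\tau_\alpha$, leaving the construction of $K_\alpha$ implicit), and it adds the remark that the weak-separation assumption you flag as the main obstacle is in fact not needed for the construction to go through.
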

\begin{proof}
We assume that the subsystems are sufficiently separated that all observers can agree that they are distinct.  In this case, each observer can identify effective masses $M_1, \; M_2$ and Hamiltonians $H_1, \; H_2$.  It follows that $d\tau_1=[(M_1 c^2)/H_1]dt$ and $d\tau_2=[(M_2 c^2)/H_2]dt$, so that each observer can construct a local-time theory for each cluster.
\end{proof}
Actually, the theorem is true without the assumption that the systems are weakly interacting.  This makes the theorem less difficult to apply than the various phenomenological approaches, which require both model justification and consistency analysis prior to use.  This theorem also allows us to prove a weaker version of Theorem 4, in the sense that we replace the assumption of  homogeneity and isotropy of the energy and mass density for a possible infinite universe by finite energy and mass density for a possibly inhomogeneous universe.
\begin{thm}
Suppose the universe has finite mass and energy density and that each observer can choose a local inertial frame from which his/her region of the universe is at rest relative to the observed system.  Then there exists a unique proper clock for the universe. 
\end{thm}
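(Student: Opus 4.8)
The plan is to deduce this theorem from the cluster decomposition result (Theorem~5) together with the observer-independence of the proper-time parameter established in Theorem~2, using the finiteness hypothesis only to keep the relevant effective masses and energies well defined, and \emph{not} to try to assign a single global effective mass to the whole (possibly infinite) universe.

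First I would fix an arbitrary observer $O_\alpha$ and let $R_\alpha$ be the portion of the universe he or she observes. Because the mass and energy densities are finite, the total energy $H_\alpha$ and the effective mass $M_\alpha$, defined by $M_\alpha c^2 = \sqrt{H_\alpha^2 - c^2{\bf P}_\alpha^2}$, are finite for $R_\alpha$; hence the representation $d\tau_\alpha = (M_\alpha c^2/H_\alpha)\,dt_\alpha$ is meaningful, and Theorem~5, applied to the decomposition of the universe into $R_\alpha$ and everything else, supplies a unique local proper clock $\tau_\alpha$ and canonical Hamiltonian $K_\alpha$ for $R_\alpha$. This step requires only that each observer's piece have finite mass and energy, not that the global quantities converge, which is exactly the weakening of the homogeneity and isotropy assumption of Theorem~4 that the statement advertises.

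Next I would use the hypothesis that $O_\alpha$ may pass to a local inertial frame in which $R_\alpha$ is at rest relative to the observed system. In that frame ${\bf U}_\alpha = {\bf P}_\alpha/M_\alpha = {\bf 0}$, so $H_\alpha = M_\alpha c^2$, $b_\alpha = \sqrt{{\bf U}_\alpha^2 + c^2} = c$, and therefore $d\tau_\alpha = dt_\alpha$: each observer's local system clock coincides with his or her own clock. It then remains to show that the clocks $\{\tau_\alpha\}$ of overlapping regions agree and hence glue to a single clock $\tau$ for the whole universe. Here I would invoke Theorem~2: for observers $O_\alpha,\,O_\beta$ whose regions meet, the proper-time coordinates of the common subsystem as seen by $O_\alpha$ are carried to those seen by $O_\beta$ by ${\bf R}_M[\tau] = {\bf C}[\,t_\beta,\tau]\,{\bf P}(O_\beta,O_\alpha)\,{\bf C}^{-1}[\,t_\alpha,\tau]$, a map that acts on the spatial coordinates of the two observers but leaves the parameter $\tau$ itself fixed. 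Consequently $\tau_\alpha$ and $\tau_\beta$ assign the same value to every event in $R_\alpha\cap R_\beta$, so the local clocks are mutually consistent and their union is a well-defined global clock available to all observers.

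The step I expect to be the main obstacle is precisely this last gluing: organizing the (possibly countably many) pairwise consistencies supplied by Theorem~2 into a single globally defined $\tau$ --- essentially a partition-of-unity argument over the cover $\{R_\alpha\}$ --- and verifying that the finite-density hypothesis is what prevents $M_\alpha$, and hence $d\tau_\alpha$, from degenerating as the regions are enlarged to exhaust the universe. Once that bookkeeping is in place, uniqueness follows as in Theorems~4 and~5, since any two candidate global clocks agree on each $R_\alpha$ and the $R_\alpha$ cover the universe.
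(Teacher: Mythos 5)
Your strategy diverges from the paper's at the decisive point, and the divergence creates a real gap rather than an alternative route. The paper's proof is a two-line application of Theorem 5: the observer splits the universe into his/her region and its complement, identifies $M_1,H_1$ and $M_2,H_2$, and then simply \emph{sums} them, setting $H=H_1+H_2$, $M=M_1+M_2$ and $d\tau=[(Mc^2)/H]\,dt$; uniqueness follows because $M$ and $H$ are fixed and invariant for all observers, so $K$ and $\tau$ are too. In other words, the paper does exactly what you announce you will not do: it assigns a single global effective mass and Hamiltonian to the whole universe, and the finiteness hypothesis is there precisely so that these totals (in particular $M_2$ and $H_2$ for the complement region) are well defined.

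The gap in your alternative is the gluing step, which you rightly flag as the main obstacle but which cannot be closed by Theorem 2. The clock $\tau_\alpha$ of a region $R_\alpha$ is determined by that region's own $M_\alpha$ and $H_\alpha$ via $d\tau_\alpha=(M_\alpha c^2/H_\alpha)\,dt$, so two overlapping regions carry the proper clocks of two \emph{different} subsystems; the paper's own relation $d\tau_i/d\tau=Hm_i/(MH_i)=b_i/b$ shows that the clock of a subsystem generally runs at a different rate from that of a larger (or different) system containing part of it, so $\tau_\alpha$ and $\tau_\beta$ need not agree on $R_\alpha\cap R_\beta$. Theorem 2 does not supply the needed consistency: it relates two observers' coordinatizations of one and the same system with one and the same $\tau$, and says nothing about the compatibility of the proper clocks of two distinct subsystems. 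Hence the proposed partition-of-unity gluing has no foundation, and the ``at rest'' hypothesis, which you use only to get $d\tau_\alpha=dt_\alpha$, does not repair it. The proof should instead form the global $M$ and $H$ directly from the two-cluster decomposition, as the paper does.
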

\begin{proof}
Applying the cluster decomposition theorem, our observer can identify masses $M_1$ for his/her region of the universe and $M_2$ for the complement region, along with Hamiltonians $H_1$ and $H_2$.  It follows that $H=H_1+ H_2,\; M=M_1+ M_2$ and $d\tau=[(M c^2)/H]dt$ define the total mass, Hamiltonian and proper clock for the universe.  We can now construct our canonical proper-time Hamiltonian $K$.  Since $M$ and $H$ are fixed, and invariant for all observers, we see that both $K$ and $\tau$ are unique and invariant for all observers. 
\end{proof}
\begin{rem}
It should be remarked that $M_i$ and $H_i,\; i=1,2$, will vary with observers, reflecting the nonuniqueness of inertial frames.
\end{rem}
\subsection{Preferred Rest Frame}
It has been known since the pioneering work of Penzias and
Wilson [26]  that a unique preferred frame of rest exists
throughout the universe and is available to all observers.  This is the 2.7 $^{\circ}$K microwave background radiation (MBR) which was discovered  in 1965 using basic microwave equipment (by today's standards).  This radiation  is now known to be highly isotropic with anisotropy limits  set at $0.001\%$.    Futhermore, direct measurements have been made of the velocity of
both our Solar System and Galaxy through  this radiation (370 and 600 $km/sec$ respectively, see Peebles [27] ).  One can only speculate as to what impact this information would have had on the thinking of Einstein, Lorentz, Minkowski, Poincar\'e, Ritz and the many other investigators of the early 1900's who were concerned with the  foundations of electrodynamics and mechanics.  The importance
of this discovery for the foundations of electrodynamics is that this frame is determined by radiation from all accelerated charged particles in the universe.  

This frame has not found a natural place in the standard framework for  theoretical physics and in fact, has been (almost) ignored. (However, Glashow and co-workers have used it as a part of a program to explore possible departures from strict Lorentz invariance in the context of elementary-particle kinematics (see Cohen and Glashow [28], and cited references)).   

Since all inertial reference frames are equivalent, the one chosen by any observer is a convention.  If we seek simplicity in this basic representation for physical reality, it is natural to attach all (globally defined) frames to the MBR, and use the proper-time of the universe for the clock.  In this case, the two postulates of the special theory are automatically satisfied, while the field and particle equations of any system will be invariant (not covariant) under the action of the Lorentz group at the global level (for all observers).  Furthermore, the speed of light would  always be $c$ relative to this frame.  Thus, we suggest this as the natural choice for all global systems. With this choice, we can have $c$ unique at the global level, while speeds larger then $c$ are allowed and not unnatural at the local level.  This approach has the additional advantage of eliminating all the known paradoxes associated with the special theory.
\subsection{Time Reversal Noninvariance}
We focus on the single particle case. (The same discussion applies to the many-particle case.) Since $d\tau=(mc^2/H)dt$, while $K(=[H^2/2mc^2+mc^2/2])$ and $m$ are always positive, we see that, if $H \rightarrow -H$ (negative energy) or $t\rightarrow-t$ (time reversal), then $K\rightarrow K$ is invariant (no negative energy), while $\tau$ changes sign. {\it In particular, our theory is noninvariant under observer time reversal at the classical level and, since $\tau$ is monotonically increasing, we acquire an arrow for (proper) time.} It is thus natural to interpret antimatter as matter with its proper-time reversed.
 
A more complete (and elegant) discussion requires the introduction of Santilli's isodual numbers [29], in which the unit 1 is replaced by $-1$ and $ab \rightarrow a \bullet b=-ab$ so that $(-1)\bullet(-1)=-1$.  It follows that $\left( {\mathbb{R}, + , \bullet } \right)$ is a field.    This clearly induces an isomorphism on $\R$ which is equivalent to reversing the direction of the real line. Santilli shows that use of the isodual numbers as the scalar field on the dual Hilbert space $\mathcal{H}^*$, allows him to provide a consistent formulation of the Stuckelberg-Feynman (quantum) theory of antimatter as matter in the isodual state (over $\mathcal{H}^*$), and proved that this gives an equivalence between charge conjugation and isoduality.

Thus, by introducing a symmetric theory of numbers, we can construct a completely symmetric theory of matter which avoids all of the natural objections to hole theory, while maintaining consistency with our physical sense of a monotonically increasing time variable.  Both Feynman [30] and Stueckelberg [31] introduced the idea of representing antimatter as matter with its time reversed.  
Our final conclusion is the same as theirs.  However, the two approaches are distinct.  In our approach, we replace $t$ by $\tau$ and acquire $K$ as its canonical Hamiltonian, so all physical interpretations only require information about $\tau$.   

The quantum theory now follows by replacing the Poisson bracket in equation (21) by the Heisenberg bracket, which leads to Schr\"{o}dinger-like equations:
\[
i\hbar \tfrac{{\partial \psi }}{{\partial \tau }} = K\psi ,\;{\text{and }}i\hbar \tfrac{{\partial \psi }}
{{\partial \tau _i }} = K_i \psi, 
\]
for the same (universal) wave function $\psi=\psi(\bf{X}, \tau)=\psi({\bf{x}}_1, \cdots,{\bf x}_n,\tau_1, \cdots, \tau_n)$ (see equation (21) and Theorem 3).   Since $K,\;K_i$  are both positive definite, as operators, they are bounded below.   Thus, the problem of negative energy which caused confusion during the early attempts to merge quantum mechanics and the special theory of relativity do not arise. (From equations (20) and (21), we see that $K$ is not equal to the sum of the $K_i$.)

The question of particle number is easily included (even in the classical case) by observing that, for any closed system of interacting particles, we can replace the definite particle number $n$ by a variable (random) particle number $N(t)$, the number of particles up to time $t$ (as seen by the observer).   Now, the only relevant variables are the conserved quantities: the total global energy, momentum, angular momentum and spin and, as in QED, for large negative $t,\;N(t) \to n_i $, (the initial particle number), and for large positive $t,\;N(t) \to n_f $,  (the final particle number), where $n_i $ and $n_f $ are assumed known from experiment.

\section{Light: Its Nature, Its Mass and Its Speed}
\subsection{\bf Photon Nature} 
Given that, in our local-time formulation of the Maxwell theory, the value of the ``natural" speed of light depends on the motion of the source, it is not presumptuous to take another look at the age-old problem of its nature and seek to understand the impediments to treating a photon on the same footing as any other elementary particle.

The single most important problem with any attempt to treat photons as elementary particles is its well-known wave property.   Interference and diffraction experiments and, indeed, a number of major fields of electrical engineering attest to the amazing precision, effectiveness and efficiency of the wave picture.  The field theory view of both photons and particles is that they are localized wave packets of the quantized electromagnetic field (or the particular particle field).  Hence, one expects that matter will behave like particles in some experiments and like waves in others and, as such, may be viewed as some type of atomic compromise between two distinct classical views.  However, these ideas go back to the fundamental work of Born, deBroglie and Heisenberg, while today we have much more experimental information about photons since those pioneering efforts (see the next section). 

It has now been known for over twenty-five years that we can control the intensity of a beam of photons in interference (or diffraction) experiments to the point that individual photons may be counted on a photographic plate (see Paul [32]).  Furthermore, the distribution of photons appears random and, only after a long time period (depending on the intensity level), do we begin to see wave patterns.  These experiments, along with the photoelectric and Compton effects, conclusively tell us the following:
\begin{enumerate}
\item It is not unreasonable to treat photons as elementary particles.
\item The wave length and frequency are characteristic of groups of photons. 
\item Concepts such as electric and magnetic fields are macro-properties that have some (but limited) reality on the atomic scale and (possibly) none at the sub-atomic scale.
\end{enumerate}
For an excellent discussion of the above, see Buenker [33].     
\subsection{\bf Photon Mass}  
In the past, work on the question of photon mass has focused on the addition of a mass term to  the Lagrangian density for Maxwell's equations and generally leads to the Proca equation  (see Bargmann and Wigner [34]).  Early work in this direction can be traced back from the  paper of  Schr{\"o}dinger and Bass [35].  As in our
approach, the speed of light is no longer constant in all reference frames.  In this case, the fields are distorted by the mass term and  experiments of Goldhaber and Nieto [36] use geomagnetic data to set an upper bound of  $3\times 10^{-24}\,GeV$ for the mass term (see Jackiw [37]).  This approach causes gauge formulation difficulties, and  has not found favor at the classical level.   

It should be recalled that Maxwell's equations are (spin $1$) relativistic wave equations (see Akhiezer and Berestetskii [39]). On the other hand, experiments show directly that photons have a weight as one would expect of any material object.  These experiments do not depend on either the special or general theory of relativity (according to Pound and Snider [40]) and are not directly dependent on frequency or wavelength measurements.

The results of this paper establish that, at the classical level, photons are created by external forces acting on charged particles.  These  photons  arise due to the resistance of charged particles to any change in their motion.  From our derivation of the wave equations, it is clear that the theory is fully gauge invariant.  Furthermore, from equation (6), we see that the effective (photon) mass is dynamical, appearing only during acceleration of the source.  From equation (7), we see that this mass generates the expected longitudinal component to the electromagnetic field. 

In a recent paper, Afshar et. al. [58] report on the presence of sharp interference and highly reliable which-way information in the same experimental arrangement for the same photons using new 
non-perturbative measurement techniques at separate spacetime coordinates, both of which refer back to the behavior of the photon at the same event (i.e., the passage through the pinholes).  

They inferred full fringe visibility from the observation that the total
photon flux was only slightly decreased when thin wires were placed
exactly at the minima of the interference pattern. Which-way
information was then obtained downstream via  imaging using a special lens system.  This allowed them to determine which-way information about each photon as it passed the plane of the pinholes.   The experiment allowed them to circumvent the predicted limitations imposed by the uncertainty principle and the entanglement between the which-way marker and the interfering quantum object as employed in other experiments.  They further point out that the  non-perturbative measurement technique used in the experiment is conceptually quite different from quantum non-demolition or non-destructive techniques which perturb the photon wavefunction directly. This shows that wave-particle duality is more a property of experimental approach and technique as opposed to an intrinsic property of photons.
  
It is not unreasonable to correlate the frequency of a photon with that of linear harmonic oscillations for low energy fields.  Drozov and Stahlhofen [59] note that the Planck formula, ${\mathcal E}=h \nu$, was based on the fact that the energy emitted from atoms resulted from electromagnetic waves  propagating as continuous harmonic oscillations. In order to solve the cavity problem, Planck postulated energy quantization in terms of the frequencies defined by the oscillations, leading to the above formula, where the total energy of the emitted electromagnetic wave is an integer multiple of $h$.  From the success of this approach, it is now natural to believe that photons are small parts of the electromagnetic field. However, such a view is untenable for describing the ultra-short pulses that are obtained in laboratories today.  If we combine the Planck formula with $c=\la \nu$, one must then accept that $\la = hc/{\mathcal{E}}$.   Since we now don't have wave-particle dualism to save us, the concept of a photon as a wave begs the question posed by  Drozov and Stahlhofen: ``how many oscillations does it contain?".  We agree with them that it is much more reasonable to relate $\nu$ to the source emission time.  From this view, the emission time is only restricted by the time energy 
uncertainty relationship $\De {\mathcal{E}} \tau \approx h/2$.  Thus, we conclude that there is no oscillation period with frequency $\nu$, just the characteristic time $\tau =1/(2\nu)$, for a single (massive) energy pulse.  (We note that the use of a Fourier series representation, as is generally done in optics, is no more than a good mathematical device and should not be used for physical interpretation.)  In closing, we point out that, as noted by Feynman [38], a small photon mass will eliminate the infrared catastrophe in QED.   
     
\subsection{Light Speed}
In this and the remaining sections, we  give additional consideration to the physical implications of our interpretation of $b$ and $b'$ as the speed of light relative to the source for the different observers (collaborative speed of light).  We also consider a few topics in astrophysics in which the conventional theory requires additional assumptions in order to explain observations that point to speeds higher than $c$. 

In order to gain some perspective, suppose an emitting system is at rest in the unprimed frame so that $b=c$. In this case, the collaborative speed of light observed in the primed frame is $b'=\gamma( {\bf{v}})c$ and the velocity of the source is seen as ${\bf{u}}'=-\gamma( {\bf{v}}){\bf{v}}$. Thus, if the two observers are separating at high speeds, both $b'$ and ${\bf{u}}'$ may be very large.     

There are some experiments where use of the observer's clock provides a clear answer.  A classic example is the Michelson-Morley experiment.  This experiment gave the first bell of doom for the ether theory, and is easily explained by the special theory (using the clock of the observer). It also has a simple explanation when the clock of the source is used since, in this case, the source is at rest in the frame of the observer so that ${\bf{u}} = {\bf{0}} \Rightarrow b=c$. 

It is clear that, at the speeds obtained in the world of our ordinary experience, no significant difference between the two approaches is expected.  However, at high energies and/or small distances, we expect differences to show up in a dramatic way.  Indeed they have, but the definition of velocity depends on the clock attached to the observer, ${\bf{w}} = {{d{\bf{x}}} \mathord{\left/ {\vphantom {{d{\bf{x}}} {dt}}} \right. \kern-\nulldelimiterspace} {dt}}$, while all contrary results are interpreted as due to time dilation. For example, without a switch in clocks, the existence of cosmic ray muons at the surface of the earth have no explanation.

 An equally valid interpretation is that the velocity of the system is not {\bf{w}}, but $ {\bf{u}} = {{d{\bf{x}}} \mathord{\left/ {\vphantom {{d{\bf{x}}} {d\tau}}} \right. \kern-\nulldelimiterspace} {d\tau}}$ and, in this case, no contrary results occur.  The use of ${\bf{w}}$ is clearly a convenient choice for most of ordinary physics (where both choices are the same).  However, in high-energy experiments, the local clock of the system is necessary (and used) to analyze and explain scattering events by using time dilation to make the results correspond to velocities below $c$.  
 
In order to obtain a different view of experiments on the lifetime of fast mesons and the velocity of ${\gamma}$ rays and light from moving sources, first consider the definition of momentum.  When the clock of the observer is used to measure time, momentum increase is attributed to relativistic mass increase so
that $ 
{\bf p}=m{\bf w}$ and $ m=m_0[1-w^2/c^2]^{-1/2}.$  
On the other hand, if we use the clock of the source, we have  
$ {\bf p}=m_0{\bf u}$ and ${\bf u}={\bf w}[1-w^2/c^2]^{-1/2}
$  
so that there is no mass increase, the (proper) velocity increases. Thus, in particle experiments, the particle has a fixed mass and invariant decay constant, independent of its velocity, but can have speeds $> c$.  An analysis of experiments on the lifetime of fast mesons, the velocity of ${\gamma}$ rays and light from moving sources reveal that, at some point, either the speed of light is assumed to be independent of the motion of the source, or time dilation is used.  Both assumptions imply that only the clock of the observer is used. This is basically inconsistent since one is using two clocks to explain one experiment.  Either, the speed of the particle is less then $c$ as determined by the observer's measuring instruments and clock, or,  if time dilation is used, one is in fact inferring that the clock of the particle is needed to explain the experiment.  In the latter case, the actual speed of the particle is then $\bf u$, which can be greater than $c$.  The problem is that one cannot use two clocks and claim consistency with the basic measurement framework for special relativity.   {\it Thus, the analysis of these experiments is flawed in its validation of the conventional theory and does not prove that the speed of light relative to the particle is $c$.}  
\subsection{ Relativistic Jets in Our Galaxy}
In 1918 Curtis [41] made the first discovery of jet-like features emanating from the nuclei of galaxies.  He identified a jet in the optical range from an elliptical galaxy in the Virgo cluster (M87).  Since then, a large number of objects with a jet-like structure have been discovered.  However, starting about 30 years ago, researchers began to find quasars with jet expansions of up to ten or more times the speed of light (see Pearson and Zensus [42]; Zensus [43]; Mirabel and Rodr\'{i}guez [44]).   This started quite a stir and led many to suggest the possible breakdown of the special theory.  Since the source appeared to be at rest relative to earth, unlike the decay of fast muons from the top of the atmosphere, the assumption of time-dilation would not work.   However, Rees [45] suggested that we may be looking at the jets from some angle relative to the observer.  He showed how these large speeds may be an aberration because we were looking at a projection of the true image onto the plane of view of the observer.  This would explain the apparent approaching and receding condensations with very different velocities.  

This is an additional assumption with no independent  verification and has not been completely accepted (see Mirabel and Rodr\'{i}guez [44]; De R\'{u}jula [46]).  If our formulation is correct, it is possible that the measured speeds are correct, but the approaching and receding condensations  are caused by different physical mechanisms.  It then follows that the additional assumptions of Rees are not required.      
\subsection{Ultra-high Energy Cosmic Rays}
The nature and origin of ultra-high energy cosmic rays (UHECR) continues to cause controversy and concern.  One year after Penzias and Wilson [26] discovered the cosmic microwave background radiation (CMBR), Greisen [47] and Zatsepin and Kuzmin [48] estimated that the mean free path of an energetic $10^{19} \;eV$ proton moving through the CMBR would be less than the size of our galaxy.  From this work, it was expected that all protons with energies above about  $4 \times 10^{19} \;eV$ (GZK cutoff energy) would be suppressed by dissipative losses in the CMBR.  (This limits how far protons can travel to about 100 Mpc.)  Thus, it was a real surprise when the Akeno Giant Air Shower Array (AGASA) Collaboration [49] reported observations of a large flux of UHECR with energies above $10^{20} \;eV$.  The HiRes (fluorescence detector) Collaboration group [50]  published results that appear consistent with support for a cut-off.  Additional studies are currently being conducted at the Pierre Auger Observatory designed to resolve the discrepancy.   

Since the first draft of this paper, recent results confirm the GZK suppression and appear to indicate that UHECRs above the GZK threshold arrive from nearby active galactic nuclei (see the review by Dar [51].)  

With the conventional definition of velocity, it is very difficult to imagine even the most powerful astrophysical systems, such as active galactic nuclei and/or radio galaxies, accelerating heavy nuclei or protons to the required high energies within existing physical theories.  However, if the local clock of the ejecting system determines the distance traveled and speed ($\bf u$) of a particle, then no new particles or the breakdown of the special theory are needed to explain the results.  Since most measurements (physical and astronomical) are based on the dimensionless ratio ${\bf{\beta}}  = {{\bf{w}} \mathord{\left/ {\vphantom {{\bf{w}} c}} \right. \kern-\nulldelimiterspace} c} \equiv {{\bf{u}} \mathord{\left/{\vphantom {{\bf{u}} b}} \right. \kern-\nulldelimiterspace} b}$, we see that the results will not change.  For example, ([52], pp. 556-561), the red shift factor $z$, used to determined distances in astronomy, is defined by:
\[
z = \sqrt {\frac{{1 + \tfrac{{\mathbf{w}}}
{c}}}
{{1 - \tfrac{{\mathbf{w}} }
{c}}}}  - 1 \equiv \sqrt {\frac{{1 + \tfrac{{\mathbf{u}}}
{b}}}
{{1 - \tfrac{{\mathbf{u}}}
{b}}}}  - 1.
\]
(The formula $z=\tfrac{{\mathbf{w}}}{c} =\tfrac{{\mathbf{u}}}{b}$ is normally used for values of $\left| {\bf w} \right| \le 10^{4} km/sec$.)
We thus conclude that light may reach us from much farther away and with more intensity than is traditionally expected.   Thus, we may well be looking at some galaxies that are not as close and others that are not as far as predicted from conventional theory.

\section{Foundations for Relativistic Quantum Theory}
In this section we provide some discussion of our current work on the corresponding canonical proper-time quantum theory. Because of a number of unexpected problems that have come to our attention, we begin with a discussion of mathematics as it relates to physics. 
\subsection{The Relationship Between Mathematics and Physics} 
With respect to the question of the relationship between physical and mathematical equivalence, we have uncovered three additional instances that shed doubt on such an unanalyzed but important assumption. 
\subsubsection{Feynman Operator Calculus}
In response to the need to provide the mathematical foundations for  Feynman's time-ordered operator calculus used in quantum electrodynamics (QED),  we have developed a constructive theory in which time is accorded its natural role as the director of physical processes (see [53] for a full discussion). 

Briefly, the theory is constructive in that operators acting at different times actually commute. This approach allowed us to develop a general (exact) perturbation theory for all theories generated by unitary groups, by providing the missing remainder term for the Dyson expansion.    We were also able to show that the theory could be reformulated as a physically motivated sum over paths as suggested by Feynman [54].  These results made it possible to prove the last two remaining conjectures of Dyson [55] concerning QED.   
\begin{enumerate}
\item The renormalized perturbation series of QED is at most  asymptotic. 
\item The ultraviolet divergency of QED is caused by a violation of the time-energy uncertainty relations. 
\end{enumerate}
In the Feynman world-view, the universe is a three-dimensional motion picture in which more and more of the future appears as time evolves. Thus, in this view, time is a physically defined variable.  This view is inconsistent with the Minkowski world-view, in which time is a mathematical variable; i.e, an additional coordinate for a space-time geometry.  

In order to see that unitary equivalence cannot always mean physical equivalence, we observe that all solutions to the dynamical Schr\"{o}dinger equation (on the same Hilbert space) generate unitary groups. However, in almost all cases, we would never equate this unitary equivalence with physical equivalence.   

We suggest the following formal definition of physical equivalence.
\begin{Def} Two representations of a given physical system are said to be physically equivalent if they can be related mathematically, and any change of independent variables can be justified by comparison with both experimental as well as theoretical results. 
\end{Def}

From this definition, we conclude that the Lagrange and Hamiltonian formulations of classical mechanics and the Schr\"{o}dinger-Heisenberg formulations of quantum theory are physically equivalent.    On the other hand (see below), the Dirac equation and the square root equation are not physically equivalent.  We also conclude that the canonical proper-time formulation of classical electrodynamics is not physically equivalent to the conventional formulation. 

\subsection{The Dirac Equation and Square-Root Equations}
In order to prepare a solid foundation for the canonical proper-time quantum theory, in [57] we began with an investigation of the Dirac equation.  It is generally believed that it is not possible to separate the particle and antiparticle components directly without approximations (when interactions are present).   We were able to construct an analytical separation (diagonalization) of the full (minimal coupling) Dirac equation into particle and antiparticle components. The diagonalization was analytic in the sense that it was achieved without transforming the wave function or the variables, as is done with the Foldy-Wouthuysen unitary transformation. We started with the Dirac equation in the form:
\[
\begin{gathered}
  i\hbar \frac{{\partial \psi }}
{{\partial t}} = (V + mc^2 )\psi  + c(\sigma  \cdot \pi )\phi  \hfill \\
  i\hbar \frac{{\partial \phi }}
{{\partial t}} = (V - mc^2 )\phi  + c(\sigma  \cdot \pi )\psi.  \hfill \\ 
\end{gathered} 
\] 
Treating these as first order partial differential equations with forcing, we first solved for delta term forcing and obtained the general solution via convolution, to get the following  set of completely separated equations
\[
\begin{gathered}
  i\hbar \frac{{\partial \psi }}
{{\partial t}} = (V + mc^2 )\psi  + \left[ {{{c^2 (\sigma  \cdot \pi )} \mathord{\left/
 {\vphantom {{c^2 (\sigma  \cdot \pi )} {i\hbar }}} \right.
 \kern-\nulldelimiterspace} {i\hbar }}} \right]\int_{ - \infty }^t {\exp \{  - iB(t - \tau )\} (\sigma  \cdot \pi )\psi (\tau )d\tau }  \hfill \\
  i\hbar \frac{{\partial \phi }}
{{\partial t}} = (V - mc^2 )\phi  + \left[ {{{c^2 (\sigma  \cdot \pi )} \mathord{\left/
 {\vphantom {{c^2 (\sigma  \cdot \pi )} {i\hbar }}} \right.
 \kern-\nulldelimiterspace} {i\hbar }}} \right]\int_{ - \infty }^t {\exp \{  - iB'(t - \tau )\} (\sigma  \cdot \pi )\phi (\tau )d\tau },  \hfill \\ 
\end{gathered} 
\]
where $B = \left[ {{{(V - mc^2 )} \mathord{\left/ {\vphantom {{(V - mc^2 )} \hbar }} \right. \kern-\nulldelimiterspace} \hbar }} \right]$ and $B' = \left[ {{{(V + mc^2 )} \mathord{\left/ {\vphantom {{(V + mc^2 )} \hbar }} \right. \kern-\nulldelimiterspace} \hbar }} \right]$.  
We also showed how, for each equation, to construct the complete probability density function:
\[
\begin{gathered}
  \rho _\psi   = \left| \psi  \right|^2  + \left| {\int_{ - \infty }^t {c\exp \{  - iB(t - \tau )\} \left[ {{{(\sigma  \cdot \pi )} \mathord{\left/
 {\vphantom {{(\sigma  \cdot \pi )} {i\hbar }}} \right.
 \kern-\nulldelimiterspace} {i\hbar }}} \right]\psi (\tau )d\tau } } \right|^2  \hfill \\
  \rho _\phi   = \left| \phi  \right|^2  + \left| {\int_{ - \infty }^t {c\exp \{  - iB'(t - \tau )\} \left[ {{{(\sigma  \cdot \pi )} \mathord{\left/
 {\vphantom {{(\sigma  \cdot \pi )} {i\hbar }}} \right.
 \kern-\nulldelimiterspace} {i\hbar }}} \right]\phi (\tau )d\tau } } \right|^2.  \hfill \\ 
\end{gathered} 
\]  
This separation shows that the Dirac equation is actually two equations, both nonlocal in time. It is well-known that the square-root operator is nonlocal in space, but related to the Dirac operator by a Foldy-Wouthuysen unitary transformation.  Thus, the true relationship between the two representations of a spin $1/2$ particle is that one is (explicitly) spatially nonlocal, while the other is implicitly time nonlocal.  We infer that this mathematical equivalence cannot be what we should mean by physical equivalence.

Since our research suggests that time may not be a (mathematical) fourth coordinate, we took another look at the Dirac equation.  In [56], we observed that the operator part of Dirac equation can also be written with the potential energy as a part of the mass ($\Psi =(\psi,\phi)^t$):
\[
D[\Psi ] = \left\{ {c\alpha  \cdot \pi  + \beta mc^2  +  V} \right\}\Psi = \left\{ {c\alpha  \cdot \pi  + \beta \left( {mc^2  + \beta V} \right)} \right\}\Psi. 
\]
From this point of view, we were able to construct another version of the Klein-Gordon equation (assuming that ${\bf A}$ and $V$ do not depend on time):
\beqa
 - \hbar ^2 \frac{{\partial ^2 \Psi }}{{\partial t^2 }} = \left\{ {c^2 \pi ^2  + 2cV\alpha  \cdot {\bf{p}} - e\hbar c\Sigma  \cdot {\bf{B}} - i\hbar c\alpha  \cdot \nabla V + \left( {mc^2  + \beta V} \right)^2 } \right\}\Psi. 
\eeqa
This equation can be factored to give a new square-root equation without any transformation of variables or any change in the wave functions:  
\beqa
i\hbar \frac{{\partial \Psi }}{{\partial t}} = \left\{ {\beta \sqrt {c^2 \pi ^2  + 2cV\alpha  \cdot {\bf{p}} - e\hbar c\Sigma  \cdot {\bf{B}} - i\hbar c\alpha  \cdot \nabla V + \left( {mc^2  + \beta V} \right)^2 } } \right\}\Psi. 
\eeqa
Note that the above equation is not a diagonalized representation.  However, it is an exact representation, which retains the same eigenfunctions and eigenvalues as the Dirac equation.  Thus, we have another analytic representation for the Dirac equation, which does not require a  Foldy-Wouthuysen unitary transformation.  Also observe that, when $V=0$, we get the standard square-root operator equation without potential.   We conclude that the square-root operator equation with potential: 
\beqn
i\hbar \frac{{\partial \Psi }}{{\partial t}} = \left\{ {\beta \sqrt {c^2 \pi ^2  - e\hbar c\Sigma  \cdot {\bf{B}} + m^2c^4 }} +V\right\}\Psi, 
\eeqn
does not represent the same physics as the Dirac equation. 

As noted above, it is known that the square-root operator is spatially nonlocal but, to our knowledge, there was no information in the literature about the nature of the nonlocality or its relationship to actual physical particles.  In [56], we  constructed an analytic representation of the square-root energy operator, which is valid for all values of the spin: ($S[\psi ]({\mathbf{x}}) = \beta \sqrt {c^2 {\mathbf{p}}^2  + m^2 c^4 } \psi ({\mathbf{x}})$)
\[
S[\psi ]({\mathbf{x}}) =  - \tfrac{{\mu ^2 \hbar ^2 c\beta }}
{{\pi ^2 }}\int\limits_{{\mathbf{R}}^3 } {\left[ {\frac{1}
{{\left\| {{\mathbf{x}} - {\mathbf{y}}} \right\|}} - 4\pi \delta \left( {{\mathbf{x}} - {\mathbf{y}}} \right)} \right]\left\{ {\frac{{{\mathbf{K}}_0 \left[ {\mu \left\| {{\mathbf{x}} - {\mathbf{y}}} \right\|\,} \right]}}
{{\,\left\| {{\mathbf{x}} - {\mathbf{y}}} \right\|}} + \frac{{2{\mathbf{K}}_1 \left[ {\mu \left\| {{\mathbf{x}} - {\mathbf{y}}} \right\|\,} \right]}}
{{\mu \,\left\| {{\mathbf{x}} - {\mathbf{y}}} \right\|^2 }}} \right\}\psi ({\mathbf{y}})d{\mathbf{y}}} .
\]  
The functions, ${\bf K}_0, \ {\bf K}_1$ are modified Bessel functions of the third kind.  They have exponential cutoffs, at about a Compton wavelength, like the well-known Yukawa potential ($={\bf K}_{1/2}$). The ${\bf K}_0$ term diverges like $-ln\left|{\bf x}-{\bf y}\right|$, while the ${\bf K}_1$ term diverges like ${\left|{\bf x}-{\bf y}\right|}^{-2}$ at ${\bf x}={\bf y}$.  The delta term acts to cancel the divergency at this point.  Thus, the square-root operator has a representation as a (spatial) nonlocal composite of  three singularities (divergent integrals). In the standard interpretation, the particle component has two negative parts and one (hard core) positive part, while the antiparticle component has two positive parts and one (hard core) negative part. These singularities are confined within a Compton wavelength such that, at the point of divergence, they naturally cancel each other, providing a finite result.    This certainly appears to represent a free three-dimensional soliton with three confined singularities (associated with a linear operator).  

\subsection{Canonical Proper-Time Equations}
We have identified two possible canonical proper-time particle equations for spin-$\tfrac{1}{2}$ particles.  The first and second below are the same, but are derived from two different starting points.  (The second is included in order to validate our assertion that the Dirac equation and square-root equation with potential does not represent the same physics.)  
\begin{enumerate}
\item The canonical proper-time version of the Dirac equation:  
\beqa
i\hbar \frac{{\partial \Psi }}{{\partial \tau}} = \left\{ {\frac{{\pi ^2 }}{{2m}} + \beta V + mc^2  + \frac{{V\alpha  \cdot \pi }}{{mc}} - \frac{{e\hbar \Sigma  \cdot {\bf{B}}}}{{2mc}} - \frac{{i\hbar \alpha  \cdot \nabla V}}{{2mc}}} +\frac{{V^2 }}{{2mc^2 }} \right\}\Psi. 
\eeqa
 \item The canonical proper-time version of the square-root equation, derived from the Dirac equation with the potential energy a part of the mass:
\beqa
i\hbar \frac{{\partial \Psi }}{{\partial \tau}} = \left\{ {\frac{{\pi ^2 }}{{2m}} + \beta V + mc^2  + \frac{{V\alpha  \cdot \pi }}{{mc}} - \frac{{e\hbar \Sigma  \cdot {\bf{B}}}}{{2mc}} - \frac{{i\hbar \alpha  \cdot \nabla V}}{{2mc}}} +\frac{{V^2 }}{{2mc^2 }} \right\}\Psi. 
\eeqa
\item The canonical proper-time version of the standard square-root equation:
\beqn
\begin{gathered}
i\hbar \frac{{\partial \Psi }}{{\partial \tau}} = \left\{ \frac{{\pi ^2 }}{{2m}} - \frac{{e\hbar \Sigma  \cdot {\bf{B}}}}{{2mc }} + mc^2  + \frac{{V^2 }}{{2mc^2 }}\right\}\Psi  \hfill \\
+ \frac{{V\beta \sqrt {c^2 \pi ^2  - ec\hbar \Sigma  \cdot {\bf{B}} + m^2 c^4 } }}{{2mc^2 }}\Psi
+ \frac{{\beta \sqrt {c^2 \pi ^2  - ec\hbar \Sigma  \cdot {\bf{B}} + m^2 c^4 } }}{{2mc^2 }}V \Psi. \hfill \\
\end{gathered} 
\eeqn
\end{enumerate} 
We should note that the eigenvalue problem for the Dirac equation and the corresponding canonical proper-time equation are closely related. In particular, 
\[
E_n \Psi_n  = \left[ {c\alpha  \cdot \pi  + \beta (mc^2  + \beta V)} \right]\Psi_n = \left[ {c\alpha  \cdot \pi  + \beta mc^2  + V} \right]\Psi_n, 
\]
implies that
\beqn
\begin{gathered}
\left[ {\frac{{E_n^2 }}{{2mc^2 }} + \frac{{mc^2 }}{2}} \right]\Psi_n  \hfill \\
 = \left[ {\frac{{\pi ^2 }}{{2m}} + \beta V + mc^2  + \frac{{V\alpha  \cdot \pi }}{{mc}} - \frac{{e\hbar \Sigma  \cdot {\bf{B}}}}{{2m}} - \frac{{i\hbar \alpha  \cdot \nabla V}}{{2mc}}} \right]\Psi_n. \hfill \\
\end{gathered}  
\eeqn
However, the eigenvalue spacing (for the two equations) is clearly not the same.

The objective our study of equations (23) and (24)  is to make sure that the spectrum of Hydrogen cannot be explained as a standard eigenvalue problem before investigating possible many-particle and/or field theory approaches.  We have made substantial progress on the equation (24), and expect to conclude our studies shortly.  

Equation (23) is of paramount interest.  However, our current progress is slowed by the need for additional mathematical research before any important physical implications can be obtained. 
\section*{\bf{Conclusion}}
In this paper, we have provided the foundations for a new approach to relativistic quantum theory.  

First, we have developed a physically and mathematically consistent formulation of classical electrodynamics and mechanics, within which:  
\begin{enumerate}
\item  The invariant speed of light $c$ is replaced by the invariant local clock of the observed system.  In this formulation, the new (collaborative or effective) speed of light is not invariant but depends on the motion of the observed system.
\item Distant simultaneity is unique for all observers.
\item  The local metric is defined in three-dimensional Euclidean space and becomes deformed in the presence of a potential field. 
\item The corresponding canonical Hamiltonian particle theory fixes a particular time direction (i.e., is noninvariant under time reversal).
\item The classical principle of impenetrability, that no two particles can occupy the same space at the same time is intrinsic to the theory.
\item  The formulation of Maxwell's equations is mathematically but not physically equivalent to the conventional one.  
\item This formulation does not depend on the structure of charged particles and does not require self-energy, advanced potentials, mass renormalization, or the problematic Lorentz-Dirac equation in order to account for radiation reaction.
\end{enumerate}
In light of the above, it is no longer clear how far cosmic rays can travel, how far we are from the distant galaxies, or how old the universe is.  Thus, all experiments and observations based on the assumed constant speed of light $c$, relative to all physical systems, needs reevaluation. 

We offer a table below, comparing the two (mathematically equivalent) formulations of classical electrodynamics.

\newpage

\begin{tabular}{r|c||c|c}
 	& {\bf Minkowski} & {\bf Proper-time} \\
\hline \hline
{\bf Reference System} &  inertial required & inertial required \\
\hline
{\bf Light speed} & independent of source & dependent on source \\
\hline
{\bf Space-time} & dependent variables & independent variables \\
\hline
{\bf Transformations} & linear Lorentz & nonlinear Lorentz \\
\hline
{\bf Cluster Property} & highly problematic & general theory \\
\hline
{\bf Many-particle} & highly problematic & general theory \\
\hline
{\bf Radiation reaction} & highly problematic & follows from theory \\
\hline
{\bf Quantum theory} & highly problematic & follows from theory \\
\hline
{\bf Time arrow} & does not exist & follows from theory \\
\hline
{\bf Universal clock} & does not exist & follows from theory \\
\hline
\end{tabular}

\bigskip

We have also discussed our work on a new analytical separation (diagonalization) of the full (minimal coupling) Dirac equation into particle and antiparticle components.  This work reveals the time nonlocal nature of the Dirac equation.  Since it is known that the  square-root equation is spatially nonlocal and, they are related by a unitary transformation, we conclude that this unitary transformation does  not preserve physical equivalence.  

We have introduced a new square-root equation that is obtained from the Dirac equation without any change of variables, by treating the potential energy as a part of the mass.  (This equation has the same eigenvalues and eigenfunctions.)  This approach also leads to a new Klein-Gordon equation.  We have also introduced the quantum version of our canonical proper-time theory.    We obtain two distinct possible wave equations for spin-$\tfrac{1}{2}$ particles.  We are currently investigating these equations in order to rule out the possibility that the spectrum of Hydrogen is obtainable as an eigenvalue problem.   

\bigskip
\noindent{\bf Acknowledments}  

We would like to thank Professor L. Horwitz from the University of Tel Aviv, Israel  and  Professor H. Winful of the University of
Michigan  for their continued interest, encouragement and constructive suggestions.  We would also like to thank the referee for a number of helpful insights. 

\bigskip

\noindent {\bf References}

\smallskip\noindent [1]  A. Einstein, {\it Ann. Phys.} (Leipzig) {\bf{17}}(1905) 891.

\smallskip\noindent [2] P. W. Bridgman, {\it A Sophisticate's Primer of Relativity},  Wesleyan University Press, Middletown Conn. (1962).

\smallskip\noindent [3] E. M. Purcell, {\it Electricity and Magnetism}, Vol. II  McGraw-Hill, New York. (1984).

\smallskip\noindent [4] A. I. Miller, {\it Frontiers of Physics 1900-1911},  Birh\"{a}user, Boston Mass. (1986).

\smallskip\noindent [5]  A. Einstein, {\it Ann. Phys.} (Leipzig) {\bf{17}}(1905) 919.

\smallskip\noindent [6]  M. Planck, {\it Physikalische Zeitschrift}  {\bf{10}}(1909) 825.

\smallskip\noindent [7]  L. V. Lorentz, {\it Philosophical Magazine}  {\bf{34}}(1867) 287.

\smallskip\noindent [8] N. Hamdan, A.K. Hariri and J. L\'{o}pez-Bonilla, {\it Hadronic Journal}  {\bf{30}} (2007) 513. 

\smallskip\noindent [9] W.  Ritz, Ann. Chim. Phys. {\bf 13}, 145 (1908).

\smallskip\noindent [10]  A. Einstein, {\it Physikalische Zeitschrift}  {\bf{10}}(1909) 821.

\smallskip\noindent [11] Autobiographical Notes {\it Albert Einstein: Philosopher-Scientist} vol 1, ed P. A. Schilpp,  Open Court Press, Peru IL (1969).

\smallskip\noindent [12]  H. R. Brown, {\it Eur. J. Phys.}  {\bf{26}}(2005) S85.

\smallskip\noindent [13]  J.A. Wheeler, R.P. Feynman, {\it Rev. Mod. Phys.} {\bf{21}} {(1949) 425.}

\smallskip\noindent [14] P. A. M. Dirac, {\it Proc. R. Soc. London} {\bf A167}, 148 (1938).

\smallskip\noindent [15]  T. Damour, {\it Ann. Phys.} (Leipzig) {\bf{17}} N. 8(2008). (It is not yet in print but is online in electronic form.)

\smallskip\noindent [16] W. Perret and G. B. Jeffery (translators, with additional notes by A. Sommerfeld),   {\it The Principle of Relativity} by H. A. Lorentz, A. Einstein, H. Minkowski and H. Weyl, Dover, New York (1952).

\smallskip\noindent [17] S. Walters, in  {\it The Expanding Worlds of General Relativity}, H. Gonner, J. Renn and J. Ritter (eds.) (Einstein Studies, vol. 7) pp. 45-86, Birkh\"{a}user, Boston, Mass (1999).

\smallskip\noindent [18]  T.L. Gill, W.W. Zachary, and J. Lindesay,  {\it Foundations of Phys.} {\bf{31}} (2001) 1299.

\smallskip\noindent [19] M. Dresden, in {\it Renormalization: From Lorentz to Landau (and Beyond)}, L. M. Brown (ed.), Springer-Verlag, New York,  (1993).

\smallskip\noindent [20] R. P. Feynman, R. B. Leighton, and M. Sands,  {\it The Feynman Lectures on Physics}, Vol II.  Addison-Wesley, New York (1974).

\smallskip\noindent [21] M.H.L. Pryce, Proc. Roy. Soc. London  A {\bf 195}, 400 (1948).

\smallskip\noindent [22] D. G. Currie, T. F. Jordan, and E. C. G. Sudarshan,
Rev. Mod. Phys. {\bf 35}, 350  (1963).

\smallskip\noindent [23]  T.L. Gill, W.W. Zachary, and J. Lindesay,   {\it Int. J. Theor. Phys.} {\bf{37}} (1998) 2573.

\smallskip\noindent [24] L. P. Horwitz and C. Piron, Helv. Phys. Acta {\bf 46}, 316 (1981).

\smallskip\noindent [25] V. L. Arnol'd,   {\it Mathematical Methods in Classical Mechanics}, Springer-Verlag, New York  (1993).

\smallskip\noindent [26] A. A. Penzias and R. W. Wilson, {\it Ap. J.} {\bf{142}} (1965) 419.

\smallskip\noindent [27]  P.J.E. Peebles, {\it Principles of Physical Cosmology},  Princeton University Press, London (1993).

\smallskip\noindent [28] A. G. Cohen and S. L. Glashow (arXiv:hep-ph/0605036 v1).

\smallskip\noindent [29] R. M. Santilli,  {\it Elements of Hadronic Mechanics, I: Mathematical Foundations}, Ukraine Academy of Sciences, Kiev (1993).

\smallskip\noindent[30]  R.P. Feynman, {\it Phys. Rev.} {\bf{74}} (1948) 939.

\smallskip\noindent[31]  E.C.G. Stueckelberg, {\it Helv. Phys. Acta} {\bf{15}} (1942) 23.

\smallskip\noindent [32] H.  Paul,  {\it Photonen: Experimente und ihre Deutung}, Vieweg, Braunschweig (1993).

\smallskip\noindent [33] R. J. Buenker, J. Chem. Phys. (Russia) {\bf 22}, (2003) 124.

\smallskip\noindent [34] V. Bargmann and E. P. Wigner, Proc. Nat. Acad. Sci. {\bf 34}, (1948) 211.

\smallskip\noindent [35] E. Schr\"odinger and L. Bass, Proc. R. Soc. London {\bf A232},  (1938) 1.

\smallskip\noindent [36] A. Goldhaber and M. Nieto,  Rev. Mod. Phys. {\bf 43}, (1971) 277.

\smallskip\noindent [37] R. Jackiw, Comments Mod. Phys. {\bf 1A}, (1999) 1.

\smallskip\noindent [38] R. P. Feynman, {\it Quantum Electrodynamics}, W. A. Benjamin, New York  (1964)

\smallskip\noindent [39] A. I. Akhiezer and V. D. Berestetskii, {\it Quantum Electrodynamics}, Wiley-Interscience, New York (1965) 

\smallskip\noindent [40] R. V. Pound and J. L. Snider, Phys. Rev.  {\bf 140}, B788 (1965).

\smallskip\noindent [41] H. D. Curtis, {\it Publ. Lick Obs.} {\bf{13}}(1918).

\smallskip\noindent [42] T. J. Pearson and J. A. Zensus, in {\it Superluminal Radio Sources}, (eds Zensus \& Pearson) Cambridge University Press, London (1987). 

\smallskip\noindent [43] J. A. Zensus, {\it Annu. Rev. Astron. Astrophys.} {\bf{35}} (1997) 607.

\smallskip\noindent [44] I. F. Mirabel and L. F. Rodr\'{i}guez, {\it Ann. Rev. Astron. Astrophys.} {\bf{37}} (1999) 409.

\smallskip\noindent [45] M. J. Rees, {\it Nature} {\bf{211}} (1966) 468.  

\smallskip\noindent [46] A. De R\'{u}jula, (arXiv:hep-ph/0412094 v1).

\smallskip\noindent [47] K. Greisen, {\it Phys. Rev. Lett.} {\bf{16}} (1966) 748.

\smallskip\noindent [48] G. T. Zatsepin and V. A. Kuzmin, {\it Sov. Phys. JETP. Lett.} {(\it{Engl. Transl.})} {\bf{4}} (1966) 78.

\smallskip\noindent [49] M. Takeda et al, {\it Phys. Rev. Lett.} {\bf{81}} (1998) 1163; {\it Astrophys. J.} {\bf{522}} (1999) 225.

\smallskip\noindent [50] C. H. Jui (HiRes collaboration), {\it Proc. 27th International Cosmic Ray Conference 2001, Hamburg}.

\smallskip\noindent [51] A. Dar, arXiv: 096.0973v1 (2009).

\smallskip\noindent [52] C. Payne-Gaposchkin and  K. Haramundanis, {\it Introduction to Astronomy},  Prentice-Hall Inc., Englewood Cliffs, NJ, second Edition, (1970). 

\smallskip\noindent [53] T. L. Gill and W. W. Zachary, {\it J. Math. Phys. } {\bf{43}} (2002) 69-93.

\smallskip\noindent [54] R. P. Feynman, {\it Phys. Rev. } {\bf{81}} (1951) 108-128.

\smallskip\noindent [55] F. J. Dyson, {\it Phys. Rev. } {\bf{75}} (1949) 1736-1755.

\smallskip\noindent [56] T. L. Gill and W. W. Zachary, {\it J. Phys. A: Math.. and Gen.} {\bf{38}} (2005) 2479-2496; Corrigendum: Ibid., {\bf{39}} (2006), 1537-1538.

\smallskip\noindent [57] T. L. Gill, W. W. Zachary and M. Alfred, {\it J. Phys. A: Math.. and Gen.} {\bf{38}} (2005) 6955-6976.

\smallskip\noindent [58] S. S. Afshar, E. Flores, K. F. McDonald and E. Knoesel, {\it Foundations of Physics} {\bf{37}} (2007) 295-305.

\smallskip\noindent [59] I. V. Drozdov and A. A. Stahlhofen, arXiv: 0803.2596v1 (2008).

\end{document}